\def\ci{\perp\!\!\!\perp}
\begin{document}

\title{A Rigorous Analysis of the Clauser-Horne-Shimony-Holt Inequality Experiment When Trials Need Not Be Independent\thanks{This work was partially supported by grant FA9550-13-1-0135 from the US Air Force Office of Scientific Research and grant N00014-10-1-0329 P00004 from the US Office of Naval Research.}}



\author{Peter Bierhorst\\Tulane University}




\date{}

\maketitle

\begin{abstract}
The Clauser-Horne-Shimony-Holt (CHSH) inequality is a constraint that local hidden variable theories must obey. Quantum Mechanics predicts a violation of this inequality in certain experimental settings. Treatments of this subject frequently make simplifying assumptions about the probability spaces available to a local hidden variable theory, such as assuming the state of the system is a discrete or absolutely continuous random variable, or assuming that repeated experimental trials are independent and identically distributed. In this paper, we do two things: first, show that the CHSH inequality holds even for completely general state variables in the measure-theoretic setting, and second, demonstrate how to drop the assumption of independence of subsequent trials while still being able to perform a hypothesis test that will distinguish Quantum Mechanics from local theories. The statistical strength of such a test is computed.
\end{abstract}

\newpage

\section{Introduction}\label{s:introduction}

It has been known since the 1964 publication by J. Bell \cite{bell:1964} that Quantum Mechanics makes predictions incompatible with any so-called local hidden variable theory (LHVT). The conflict can be tested experimentally with an instrument that generates entangled particles and two particle detectors that can measure certain properties, such as spin or polarization. For such an experiment, the Clauser-Horne-Shimony-Holt (CHSH) inequality \cite{chsh:1969} provides a constraint on the possible outcomes under a LHVT; according to the prediction of Quantum Mechanics, the constraint will be violated. The profound physical implications of the CHSH experiment have been long discussed, and more recently, the experiment has been found to have new applications in the field of device-independent quantum key distribution \cite{BHK:2005,acin:2007} and device-independent randomness expansion \cite{pironio:2010}.

The CHSH inequality is as follows:
\begin{equation}\label{e:gCHSH}
-2 \le  E_{ab}(D_1 D_2) - E_{a'b}(D_1 D_2) + E_{ab'}(D_1 D_2) + E_{a'b'}(D_1 D_2) \le 2,
\end{equation}
where $E_{xy}(D_1D_2)$ is the expected value of a detectable quantity $D_1D_2$ when the measurement apparatus has setting $xy$. The precise meaning of this will be explained in the next section, but what is immediately clear is that (\ref{e:gCHSH}) is a \emph{probabilistic} statement, asserting that under locality, a particular function of the probabilities of various experimental outcomes cannot exceed a certain quantity. According to the predictions of Quantum Mechanics, this quantity will be exceeded. 

The probabilistic nature of the constraint (\ref{e:gCHSH}) raises two issues. The first issue is: how does one build an appropriate mathematical model for the experiment? In the original proofs of the Bell \cite{bell:1964} and CHSH \cite{chsh:1969} inequalities, it is tacitly assumed that the random variable that models the state of the system can be taken to be absolutely continuous, in the sense that it has a probability density function. Though this is a fairly reasonable assumption to make about a random variable modeling a real-world phenomenon, in the interest of full generality it would be best to not make such a claim. In some recent work on hidden variable models \cite{fritz:2012,brandenburger:2012}, authors have worked in a more general measure-theoretic setting, though the frameworks set out in \cite{fritz:2012,brandenburger:2012} have not been used to prove the original CHSH inequality or model repeated trials of the experiment.

The second issue is: how does one draw a conclusion from the experimental data? As the constraints on LHVTs are probabilistic, any single execution of the experiment does not provide evidence for or against any one particular theory. (This is for the same reason that the result of a single coin toss does not tell you if a coin is biased.) The standard strategy for dealing with this is to run many trials of the experiment and compare the sample means to the predicted expectations. There is a problem, though -- the sample means needn't converge to the predicted expectations. One could expect convergence if one could assume that subsequent trials are independent and identically distributed (i.i.d.) -- but plausible though this assumption seems, it need \emph{not} be satisfied by a LHVT. Indeed, it is not hard to devise a mechanism for a LHVT to violate this assumption: detected particles could leave some sort of residue in the particle detectors that biases the outcome of the next incoming particle. This complication has been referred to as the ``memory loophole" in \cite{barrett:2002}, and it has also been addressed in \cite{gill:2003}. (Possible interdependence between experimental trials can also cause security problems for quantum key distribution protocols, as seen in \cite{hanggirenner:2013,barrett:2013}.) \cite{barrett:2002} concludes that, even allowing for time dependence, quantum mechanical experimental data can be reliably distinguished from the data produced by any LHVT; however, the paper uses some informal justifications and assumes that the state random variable is absolutely continuous. \cite{gill:2003} reaches the same conclusion with more rigor, but the exact bound on the statistical p-value derived from the Azuma-Hoeffding inequality \cite{hoeffding:1963,azuma:1967} can be improved on. (Here, the ``p-value" is the probability of seeing data as or more extreme than what is observed experimentally, under a LHVT.) 

In this paper, we resolve these two issues simultaneously. We present a completely general measure-theoretic model for the Bell test experiment, making no unnecessary assumptions about the random variables involved. Using this framework, we show that the CHSH inequality can still be derived. The framework can be extended in a natural way to accommodate repeated trials that need not be independent and/or identically distributed. In the extended framework, we prove that a hypothesis test can reliably distinguish between Quantum Mechanics and LHVTs, where the null hypothesis is that nature is governed by a LHVT. Interestingly, the p-value for rejecting the null hypothesis is shown to be the same as it would be if we restricted the null hypothesis to the narrower class of LHVTs that are i.i.d. That is, allowing for LHVTs with memory does not increase the probability of violating the CHSH inequality under the null hypothesis. The calculated p-value of the hypothesis test described in this paper compares favorably to other calculations of p-values in Bell-inequality experiments \cite{gillvandam:2005,zhang:2011}.

The paper uses the formalism of measure-theoretic probability (see, e.g., \cite{chung:1974}). The structure is as follows: in Section \ref{s:setting}, we describe the mathematical model for the CHSH experiment, in Section \ref{s:math_develop}, we derive the CHSH inequality in this setting, and in Section \ref{s:hypothesis_test}, we extend the framework to the multiple trial, non-i.i.d. setting and show how to set up an appropriate hypothesis test, which is then analyzed. There is also an appendix in which we provide some context for our mathematical model by comparing it to another recent model of hidden variable theories given by Brandenburer and Yanofsky in \cite{brandenburger:2008}.

\section{The Setting And The Mathematical Model}\label{s:setting}

Let us describe the setup of the Bell test experiment, which is depicted in Figure \ref{fig:Bell_Diagram}. A photon source, such as a low-powered laser, is pointed at an object with specific properties, such as a nonlinear crystal, which should generate an entangled pair of photons in the singlet state. Upon arrival, each of these photons is subjected to a measurement by a detector.

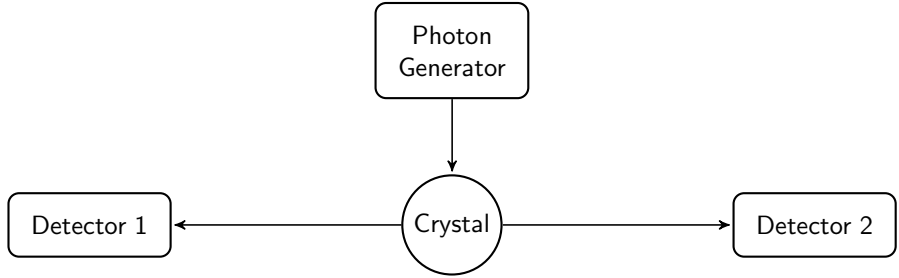
\begin{figure} [h]
\begin{center}
\begin{tikzpicture}[
  font=\sffamily,
  every matrix/.style={ampersand replacement=\&,column sep=2.7cm,row sep=1cm},
  source/.style={draw,thick,rounded corners,inner sep=.3cm},
  process/.style={draw,thick,circle, inner sep = 1mm},
  sink/.style={draw,thick,rounded corners, inner sep=.3cm},
  datastore/.style={draw,very thick,shape=datastore,inner sep=.3cm},
  dots/.style={gray,scale=2},
  to/.style={->,>=stealth',shorten >=1pt,semithick,font=\sffamily\footnotesize},
  every node/.style={align=center}]

\matrix{
\& \node[source] (gen) {Photon\\Generator}; \& \\
\node[sink] (det1) {Detector 1};
\& \node[process] (split) {Crystal};
\& \node[sink] (det2) {Detector 2};\\
};

  \draw[to] (gen) -- node[midway,right]{ } (split);
  \draw[to] (split) -- node[midway,below]{ } (det1);
  \draw[to] (split) -- node[midway,below]{ } (det2);

\end{tikzpicture}
\end{center}
\caption{Diagram of a Bell Test Experiment}
\label{fig:Bell_Diagram}
\end{figure}

As depicted in Figure \ref{fig:Detector_Detail}, Detector 1 has two measurement settings and two possible outputs. The detector measures the polarization of the incoming photons; the setting is the angle at which polarization is measured. The two setting choices, $a$ or $a'$, are chosen to maximize the violation of the CHSH inequality. Detector 2 has a very similar scheme; the only difference is that we label its settings as $b$ and $b'$, to distinguish them from the settings of detector 1. The time of detection of the photons should be calibrated so the selection of the setting choice at detector 1 is spacelike separated from the detection event at detector 2, and vice-versa.

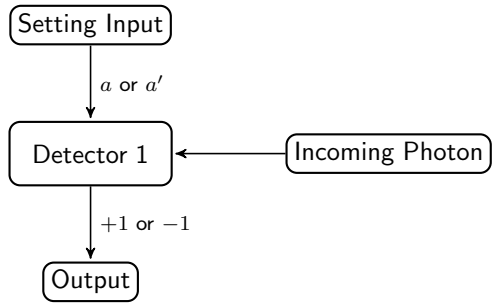
\begin{figure} [h]
\begin{center}
\begin{tikzpicture}[
  font=\sffamily,
  every matrix/.style={ampersand replacement=\&,column sep=1.5cm,row sep=1cm},
  source/.style={draw,thick,rounded corners,inner sep=.1cm},
  process/.style={draw,thick,circle, inner sep = 1mm},
  sink/.style={draw,thick,rounded corners, inner sep=.3cm},
  datastore/.style={draw,very thick,shape=datastore,inner sep=.3cm},
  dots/.style={gray,scale=2},
  to/.style={->,>=stealth',shorten >=1pt,semithick,font=\sffamily\footnotesize},
  every node/.style={align=center}]

\matrix{
\node[source] (input) {Setting Input}; \\
\node[sink] (det1) {Detector 1}; \& \node[source](photon) {Incoming Photon};\\
\node[source] (output) {Output};\\
};

  \draw[to] (input) -- node[midway,right]{$a$ or $a'$} (det1);
  \draw[to] (det1) -- node[midway,right]{$+1$ or $-1$} (output);
  \draw[to] (photon) -- node[midway,right]{ } (det1);

\end{tikzpicture}
\end{center}
\caption{Detail at Detector 1}
\label{fig:Detector_Detail}
\end{figure}

We now model a single trial of the experiment, and leave the repeated-trial scenario to Section \ref{s:hypothesis_test}. The following definition contains the necessary elements for the model. Standard concepts such as ``probability measure" are defined in \cite{chung:1974}.

\begin{definition}\label{def:lambda_A_B_D1_D2}
Let $(\Omega, {\mathcal F}, P)$ be a probability space, where $\Omega$ is a set (the sample space), ${\mathcal F}$ is a $\sigma$-algebra on $\Omega$ (${\mathcal F}$ is a set of events), and $P$ is a probality measure on ${\mathcal F}$. Let $\lambda$, $A$, $B$, $D_1$, $D_2$ be the random functions
$$
\lambda: \Omega \rightarrow \Lambda, \quad \textnormal{$\Lambda$ is a measureable space}, 
$$ $$
A : \Omega \rightarrow \{a,a'\}, \quad B : \Omega \rightarrow \{b,b'\}, 
$$ $$
D_1: \Omega \rightarrow \{-1,1\}, \quad D_2: \Omega \rightarrow \{-1,1\}.
$$
We call $\lambda$ the state of the system prior to measurement. $A$ and $B$ are detector 1's and detector 2's settings, respectively, and $D_1$ and $D_2$ are detector 1's and detector 2's output, respectively.  We label events in $\mathcal F$ corresponding to the outputs of $D_1$ and $D_2$ with the following notation:
\begin{eqnarray*}
+_1 = \{D_1=+1\} \quad\quad -_1 = \{D_1=-1\}\\
   +_2 = \{D_2=+1\} \quad\quad  -_2 = \{D_2=-1\}.
\end{eqnarray*}
\end{definition}

\medskip

The most general of the five random variables above is $\lambda$. This generality is fitting, because $\lambda$ describes the portion of the experiment that we don't directly observe: the state of the photon pair that is theorized to be travelling towards the detectors. Quantum Mechanics has a well-defined description of $\lambda$ and how it triggers the detectors. But we also want to be able to model any conceivable LHVT, so we define the state of the system, $\lambda$, with complete generality.

The other four random variables are more straightforward because they model aspects of the experiment that we can directly observe. We model the detector settings as random variables, as we want the experimenter to toggle the detector settings randomly and independently of anything else going on in the experiment, with the choice  of setting occurring just before the detection event. 

The following three assumptions encapsulate a set of requirements that an experimenter can satisfy in order to properly test Bell's theorem. The notation ``$X \ci Y$"  means ``$X$ is independent of $Y$."

\medskip

\noindent {\sc Experimental Assumption 1:}
\begin{equation}\label{e:exp1}
    A \ci  B.
\end{equation}

\medskip

\noindent The assumption above asserts that choice of measurement settings are independent of each other. In practice, this could be achieved by toggling the measurement setting according to the output of a random number generator attached to the detector, or the output of some independent quantum process that generates randomness, or any other desired source of randomness believed to be uncorrelated with other parts of the experiment.

\medskip

\noindent {\sc Experimental Assumption 2:}
\begin{equation}\label{e:exp2}
    P(A =a)P(A = a')P(B =b)P(B = b') >0.
\end{equation}

\medskip

\noindent The above assumption ensures that the experimenter sets a positive probability of choosing any given detector setting.

\medskip

\noindent {\sc Experimental Assumption 3:}
\begin{equation}\label{e:exp3}
 A \ci \lambda, \quad B \ci\lambda.
\end{equation}

\medskip

\noindent The third and final experimental assumption captures the notion that whatever process is used to choose the detector settings, it should be chosen independently of the state of the approaching particles. Again, we are trusting that our source of randomness for the detector settings is uncorrelated to anything else in the experiment.

(\ref{e:exp3}) is closely related to the ``$\lambda$-independence" assumption that appears in Brandenburger and Yanofsky \cite{brandenburger:2008}. Note that, unlike \cite{brandenburger:2008}, we don't make a slightly stronger assumption that the \emph{joint} distribution of $A$ and $B$ is independent from $\lambda$, written $(A, B) \ci \lambda$; this stronger assumption turns out to be unnecessary in our framework. This contrast is explored in the appendix.

\medskip

We now have a mathematical model for the experiment.  This gives us a framework to discuss a local theory and the conditions it must satisfy. A LHVT must satisfy a locality condition, in addition to the three assumptions described above. To formulate it in a concise manner, we define the random vectors 
$$
V_1 = (D_1, A), \quad \quad V_2 = (D_2, B).
$$
Then the locality assumption is that $V_1$ and $V_2$ are conditionally independent given $\lambda$, which is written as follows:

\medskip

\noindent {\sc Locality Assumption:}
\begin{equation}\label{e:locality}
 (V_1 \ci V_2) |\lambda.
\end{equation}

\medskip

\noindent So, why do we choose equation (\ref{e:locality}) as an expression of locality? Remember that $\lambda$ represents what is going on between the two detectors, prior to the measurement event, that can affect the detection events. Once we condition on this knowledge, what occurs at detector 1 should be independent of what occurs at detector 2. Equation (\ref{e:locality}) says that the events at detector 1 cannot be correlated with the events at detector 2 \emph{beyond} the effects of the shared history of what happened between them prior to detection, represented by $\lambda$. 

Since $V_1$ and $V_2$ each only have four possible outputs, (\ref{e:locality}) is essentially a statement of the conditional independence of a collection of sixteen pairs of events.  For instance, one of the sixteen consequences of (\ref{e:locality}) would be
$$
P\bigg [\big(+_1\cap\{A=a'\}\big)\bigcap \big(-_2\cap\{B=b\}\big)\big|\lambda\bigg ]
$$
\begin{equation}\label{e:oneconseq}
= P\big (+_1\cap\{A=a'\}\big|\lambda\big)\cdot P\big(-_2\cap\{B=b\}\big|\lambda\big).
\end{equation}
When we use (\ref{e:locality}), it will be through equivalences like the one above.

The conditional probabilities in (\ref{e:oneconseq}) are themselves random variables, as defined in \cite{chung:1974}. Theoretically these can be complicated constructions, but if $\lambda$ is a discrete random variable, the dependent random variable $P(E|\lambda)$ is also discrete, taking the value $P(E|\{\lambda=x\})$ when $\lambda = x$. This simplified situation has the benefit of being highly intuitive, and it is explored in the appendix. For now, we make no such simplifying assumption about $\lambda$. 

We refer to a Bell experiment satisfying (\ref{e:exp1}), (\ref{e:exp2}), (\ref{e:exp3}) and (\ref{e:locality}) as being governed by a LHVT. Experimental results inconsistent with assumptions (\ref{e:exp1})-(\ref{e:locality}) can be considered violations of the LHVT hypothesis, implying that one of the assumptions must not hold. We will further explore how to interpret a violation in the conclusion.

\section{Deriving the CHSH Inequality}\label{s:math_develop}

In this section, we work in the fully general setting of Section \ref{s:setting} and derive the CHSH inequality, given in (\ref{e:gCHSH}). Thus our first task is to precisely define the expressions in (\ref{e:gCHSH}). If we condition on the event that detector 1 is set to ``$a$" and detector 2 is set to ``$b$", we can discuss the quantity
\begin{equation}\label{e:Eab(D1D2)}
E_{ab}(D_1D_2):=E(D_1D_2 | A=a,B=b),
\end{equation}
where $E$ denotes the expectation value. It will save space to use shorthands such as ``$a$" or ``$ab'$" for the events $\{A=a\}$ or $\{A=a\}\cap\{B=b'\}$, etc., as is done in (\ref{e:gCHSH}).

Deriving (\ref{e:gCHSH}) in the general setting of Section \ref{s:setting} takes some work. The following notation will be useful:
\begin{equation}\label{e:condprobshorthand}
\mu_{X}(Y|\xi):= {1 \over P(X)}P(Y\cap X|\xi).
\end{equation}
\noindent(\ref{e:condprobshorthand}) is introduced to approximate an intuitive notion of the probability of event $Y$ that is conditioned simultaneously on the event $X$ \emph{and} the random variable $\xi$.   Using this shorthand, we can derive the following expression,
\begin{equation}\label{e:expectab}
E_{ab}(D_1 D_2) = \int_{\Omega} \big[ \mu_{ab}(D_1D_2=+1| \lambda) - \mu_{ab}(D_1 D_2 = -1| \lambda) \big] dP,
\end{equation}
where the integral is taken over $\Omega$ with respect to the probability measure $P$. The justification of equation (\ref{e:expectab}) is given by the following lemma.  Note that the proof makes no use of the locality assumption (\ref{e:locality}).

\begin{lemma}\label{l:Eab(D1_D2)}
Let $a$, $b$, $D_1$, $D_2$ be as in Definition \ref{def:lambda_A_B_D1_D2}. Then, under (\ref{e:exp1}) and (\ref{e:exp2}), the equation (\ref{e:expectab}) holds.
\end{lemma}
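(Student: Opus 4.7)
The plan is to unwind the definitions on both sides and verify the equality using only two standard facts: that for an event $X$ with $P(X) > 0$, conditioning is well-defined, and that for any event $E$, $P(E) = \int_\Omega P(E\mid\lambda)\, dP$ (the tower property characterizing the conditional probability $P(E\mid\lambda)$).

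First I would check that the conditioning on $\{A=a\}\cap\{B=b\}$ is legitimate. By Experimental Assumption 1, $A \ci B$, so $P(ab) = P(A=a)P(B=b)$, and by Experimental Assumption 2 both factors are strictly positive. Thus $P(ab) > 0$ and $E(D_1 D_2 \mid A=a, B=b)$ is well-defined. Since $D_1 D_2$ takes only the two values $+1$ and $-1$, the conditional expectation collapses to the difference of conditional probabilities:
\begin{equation*}
E_{ab}(D_1 D_2) = P(D_1 D_2 = +1 \mid ab) - P(D_1 D_2 = -1 \mid ab).
\end{equation*}

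Next I would rewrite each term on the right using the definition of conditional probability on an event:
\begin{equation*}
P(D_1 D_2 = \pm 1 \mid ab) = \frac{P(\{D_1 D_2 = \pm 1\}\cap ab)}{P(ab)}.
\end{equation*}
Applying the tower property to the event $\{D_1 D_2 = \pm 1\}\cap ab$, I would write
\begin{equation*}
P(\{D_1 D_2 = \pm 1\}\cap ab) = \int_\Omega P(\{D_1 D_2 = \pm 1\}\cap ab \mid \lambda)\, dP.
\end{equation*}
Dividing through by $P(ab)$ and pulling the constant inside the integral yields exactly $\int_\Omega \mu_{ab}(D_1 D_2 = \pm 1 \mid \lambda)\, dP$, by the shorthand defined in (\ref{e:condprobshorthand}).

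Subtracting the $-1$ integral from the $+1$ integral and combining them under a single integral sign (using linearity) produces equation (\ref{e:expectab}), completing the proof. There is no real obstacle here; the only subtlety is notational bookkeeping between ``conditioning on an event'' versus ``conditioning on a random variable,'' which the shorthand $\mu_X(Y\mid\xi)$ is designed to mediate. Notably, and as the lemma's remark emphasizes, the locality assumption (\ref{e:locality}) plays no role — the identity is purely a rewriting of the definition of conditional expectation in terms of the auxiliary variable $\lambda$.
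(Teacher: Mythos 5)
Your proof is correct, and it reaches (\ref{e:expectab}) by the same two ingredients as the paper (positivity of $P(ab)$ from Assumptions 1 and 2, plus the tower property over $\sigma(\lambda)$), but with the two main steps performed in the opposite order, which is a genuine if modest simplification. The paper first conditions the whole product $I_{ab}D_1D_2$ on $\lambda$ and only then splits it into the two events $\{D_1D_2=\pm1\}\cap ab$; that split is an almost-sure identity between conditional expectations (equation (\ref{e:named})), which the paper must verify by checking the defining integral property over every set in $\sigma(\lambda)$. You instead split the \emph{unconditional} expectation $E(D_1D_2\mid ab)$ into $P(D_1D_2=+1\mid ab)-P(D_1D_2=-1\mid ab)$ first --- an elementary fact about a two-valued random variable --- and only afterwards apply $P(E)=\int_\Omega P(E\mid\lambda)\,dP$ to each event separately. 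This sidesteps the one nontrivial verification in the paper's argument entirely: you never need to manipulate an a.s.\ identity between $\sigma(\lambda)$-measurable random variables, only to invoke the defining property of conditional probability with the test set $\Omega$. The paper's ordering has the minor advantage of producing the intermediate formula (\ref{e:oneoff}), whose integrand is exactly the random variable appearing in later arguments, but as a proof of the lemma itself your route is cleaner and equally rigorous. Your closing observation that the locality assumption is never used matches the remark preceding the lemma in the paper.
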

\begin{proof}
By  (\ref{e:exp1}) and (\ref{e:exp2}), $P(a\cap b) >0$. If we let $I_{ab}$ denote the indicator function of the event $a \cap b$, we can write
\begin{equation}\label{e:gustavo}
E_{ab}(D_1 D_2) = {{E(I_{ab}D_1 D_2)}\over P(ab)} = {E(E(I_{ab}D_1 D_2|\lambda))\over P(ab)}
\end{equation}
by the definition of conditional expectation (when we condition on events) and the law of iterated expectation. Note that we can be sure that the conditional expectation $E(I_{ab}D_1 D_2|\lambda)$ is guaranteed to exist, as $E(|I_{ab}D_1 D_2|)$ is  finite.  

We claim that 
\begin{equation}\label{e:named}
E(I_{ab}D_1 D_2|\lambda) = E(I_{\{D_1 D_2 = +1\} \cap ab }|\lambda) - E(I_{\{D_1 D_2 = -1\} \cap ab }|\lambda), \quad \textnormal{a.s.}
\end{equation}
To prove the assertion, we must show that for all $A \in \sigma ( \lambda)$,
$$
\int_A  E(I_{\{D_1 D_2 = +1\} \cap ab }|\lambda) - E(I_{\{D_1 D_2 = -1\} \cap ab }|\lambda)dP = \int_A I_{ab} D_1 D_2 dP.
$$
Indeed,
\begin{eqnarray*}
&& \int_A  E(I_{\{D_1 D_2 = +1\} \cap ab }|\lambda) -  E(I_{\{D_1  D_2 = -1\} \cap ab }|\lambda) dP\\
&=& \int_A I_{\{D_1  D_2 = +1\} \cap ab } - I_{\{D_1 D_2 = -1\} \cap ab } dP \\
&=& (+1)P(\{D_1  D_2 = +1\} \cap ab \cap A) + (-1) P(\{D_1  D_2 = -1\} \cap ab \cap A) \\
&=& \int_{\Omega} (D_1 D_2)  I_A I_{ab} dP = \int_A  I_{ab} D_1 D_2 dP,
\end{eqnarray*}
which proves (\ref{e:named}). Plugging (\ref{e:named}) into (\ref{e:gustavo}), we can write
\begin{equation}\label{e:oneoff}
E_{ab}(D_1 D_2)  =   E\bigg{(}{1\over {P(ab)}}\big{\lbrack} E(I_{\{D_1 D_2 = +1\} \cap ab }|\lambda) - E(I_{\{D_1  D_2 = -1\} \cap ab }|\lambda)\big{\rbrack} \bigg{)}.
\end{equation}
Using the notation introduced in (\ref{e:condprobshorthand}), we have
$$
\mu_{ab}(D_1 D_2=+1| \lambda) = {1\over {P(ab)}} E(I_{\{D_1 D_2=+1\} \cap ab}| \lambda),
$$
so we can rewrite (\ref{e:oneoff}) as
$$
E\bigg{(} \mu_{ab}(D_1 D_2=+1| \lambda) - \mu_{ab}(D_1 D_2=-1| \lambda)\bigg).
$$
Thus, (\ref{e:expectab}) holds.
\end{proof}

\vspace{-.18 in} 

$\hfill\Box$

\medskip

\medskip

\medskip

As we work toward the CHSH inequality, it will be useful to expand expression (\ref{e:Eab(D1D2)}), for which we introduce a shorthand for readability:
\begin{eqnarray*}
{\bf a}= \big{[}\mu_a(+_1|\lambda) -\mu_a(-_1|\lambda)\big{]}, \quad {\bf b} = \big{[}\mu_{b}(+_2|\lambda) -\mu_{b}(-_2|\lambda)\big{]}, \\
{\bf b'} = \big{[}\mu_{b'}(+_2|\lambda) -\mu_{b'}(-_2|\lambda)\big{]}, \quad {\bf a'} = \big{[}\mu_{a'}(+_1|\lambda) -\mu_{a'}(-_1|\lambda)\big{]}.
\end{eqnarray*}
The following lemma will also be useful:

\begin{lemma}\label{l:add_cond_probs}
Let $(\Omega, {\mathcal F}, P)$ be a probability space, let $\mathcal  G \subseteq {\mathcal F}$ be a sub-$\sigma$-algebra of $\mathcal F$, and let $\{B_i\}_{i\in I}$ be a countable indexed set of pairwise-disjoint events in $\mathcal F$.  Then,
$$
P(\cup_{i\in I} B_i | \mathcal G) = \sum_{i \in I} P(B_i |\mathcal G), \quad\textnormal{almost surely.}
$$
\end{lemma}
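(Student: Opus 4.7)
The plan is to verify the identity via the defining property of conditional probability. Writing $S := \sum_{i \in I} P(B_i \mid \mathcal{G})$, I need to check two things: that $S$ is $\mathcal{G}$-measurable, and that $\int_G S\,dP = P\bigl((\cup_{i \in I} B_i) \cap G\bigr)$ for every $G \in \mathcal{G}$. Once these hold, the a.s.-uniqueness of $P(\cup_i B_i \mid \mathcal{G})$ gives the conclusion.

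Measurability is straightforward: each $P(B_i \mid \mathcal{G})$ is $\mathcal{G}$-measurable by construction, and a countable sum (as a pointwise limit of partial sums) of $\mathcal{G}$-measurable functions is $\mathcal{G}$-measurable. Since $I$ is countable, I can enumerate it as $i_1, i_2, \ldots$ and work with finite partial sums. Each conditional probability satisfies $P(B_{i_k} \mid \mathcal{G}) \ge 0$ almost surely; taking a countable union of the corresponding null sets (still null) shows that all the partial sums are a.s. nonnegative and increasing.

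For the integral identity, fix $G \in \mathcal{G}$ and apply the monotone convergence theorem to the a.s.-increasing sequence of partial sums, which lets me interchange the sum and the integral:
\[
\int_G S\,dP = \sum_{i \in I} \int_G P(B_i \mid \mathcal{G})\,dP = \sum_{i \in I} P(B_i \cap G),
\]
using the defining property of $P(B_i \mid \mathcal{G})$ in the second equality. Since the $B_i$ are pairwise disjoint, so are the sets $B_i \cap G$, and countable additivity of $P$ gives $\sum_{i \in I} P(B_i \cap G) = P\bigl((\cup_{i \in I} B_i) \cap G\bigr)$, as desired.

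The main (mild) subtlety is keeping track of null exceptional sets: each $P(B_i \mid \mathcal{G})$ is only defined up to a null set, so nonnegativity and the monotone-convergence hypothesis only hold off an exceptional set depending on $i$. Countability of $I$ is exactly what saves us, since the union of countably many null sets remains null. Nothing else in the argument is nontrivial; this is essentially countable additivity of the measure $A \mapsto P(A \mid \mathcal{G})(\omega)$ obtained from a regular conditional probability, proved here directly without needing regularity.
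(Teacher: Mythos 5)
Your proof is correct and follows exactly the route the paper intends: the paper's own ``proof'' is a one-line appeal to the measure-theoretic definition of conditional probability, and your argument (verifying $\mathcal{G}$-measurability of the sum and the defining integral identity via monotone convergence and countable additivity, with due care for the countably many null sets) is precisely the standard verification being alluded to. Nothing to correct.
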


\begin{proof}
This follows in a straightforward manner from the measure-theoretic definition of conditional probability. $\hfill\Box$
\end{proof}

\begin{proposition}\label{p:Eab(D1D2)}
Let $a$, $b$, $D_1$, $D_2$ be as in Definition \ref{def:lambda_A_B_D1_D2}. Then, under (\ref{e:exp1}), (\ref{e:exp2}), (\ref{e:locality}),
\begin{equation}\label{e:cpct}
E_{ab}(D_1D_2) = \int_{\Omega} {\bf a} {\bf b} \hspace{1mm} dP.
\end{equation}
\end{proposition}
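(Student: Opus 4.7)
The plan is to start from the integral representation of $E_{ab}(D_1D_2)$ given by Lemma \ref{l:Eab(D1_D2)} and show that the integrand equals $\mathbf{a}\mathbf{b}$ almost surely. Concretely, I want to establish that
$$\mu_{ab}(D_1D_2=+1|\lambda) - \mu_{ab}(D_1D_2=-1|\lambda) = \mathbf{a}\mathbf{b} \quad \text{a.s.}$$
and then conclude by plugging this into (\ref{e:expectab}).

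First I would decompose the events $\{D_1D_2=+1\}\cap ab$ and $\{D_1D_2=-1\}\cap ab$ as pairwise-disjoint unions:
$$\{D_1D_2=+1\}\cap ab = (+_1\cap a\cap +_2\cap b)\cup (-_1\cap a\cap -_2\cap b),$$
$$\{D_1D_2=-1\}\cap ab = (+_1\cap a\cap -_2\cap b)\cup (-_1\cap a\cap +_2\cap b),$$
and apply Lemma \ref{l:add_cond_probs} (with $\mathcal{G}=\sigma(\lambda)$) to split the corresponding conditional probabilities $P(\cdot|\lambda)$ over these finite disjoint unions.

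Next I would invoke the locality assumption (\ref{e:locality}) in the form of (\ref{e:oneconseq}) to factor each of the four resulting conditional probabilities. For instance,
$$P\bigl((+_1\cap a)\cap (+_2\cap b)\bigm|\lambda\bigr) = P(+_1\cap a|\lambda)\cdot P(+_2\cap b|\lambda), \quad \text{a.s.,}$$
and similarly for the other three sign combinations. Combined with (\ref{e:exp1}), which gives $P(ab)=P(a)P(b)$ (and (\ref{e:exp2}) guarantees both factors are nonzero so the $\mu$ notation is well-defined), dividing by $P(ab)$ yields
$$\mu_{ab}(+_1\cap +_2|\lambda) = \mu_a(+_1|\lambda)\mu_b(+_2|\lambda), \quad \text{a.s.,}$$
and analogous factorizations for the remaining three joint outcomes.

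The remainder is a short algebraic identity: summing the $(+,+)$ and $(-,-)$ contributions and subtracting the $(+,-)$ and $(-,+)$ contributions gives
$$\bigl[\mu_a(+_1|\lambda)-\mu_a(-_1|\lambda)\bigr]\bigl[\mu_b(+_2|\lambda)-\mu_b(-_2|\lambda)\bigr] = \mathbf{a}\mathbf{b}, \quad \text{a.s.,}$$
which is exactly the integrand of (\ref{e:cpct}). Substituting into (\ref{e:expectab}) closes the proof. The main obstacle is bookkeeping in the measure-theoretic setting: one must be careful that the factorization from (\ref{e:locality}) and the additivity from Lemma \ref{l:add_cond_probs} both hold only almost surely, and that exception sets from the four factorizations combine into a single null set so the resulting identity of integrands holds $P$-a.s. and can legitimately be integrated.
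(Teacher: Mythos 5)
Your proposal is correct and follows essentially the same route as the paper's proof: decompose $\{D_1D_2=\pm1\}\cap ab$ into disjoint joint-outcome events, apply Lemma \ref{l:add_cond_probs}, factor each term via the locality assumption (\ref{e:locality}) together with (\ref{e:exp1}) and (\ref{e:exp2}), and collect the four terms into the product $\mathbf{a}\mathbf{b}$. Your explicit attention to combining the almost-sure exception sets is a welcome (if routine) addition, but the argument is the same.
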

\begin{proof}
Note that
\begin{eqnarray*}
\{D_1 D_2 = 1 \} &=& ( +_1 \cap +_2 )\cup ( -_1 \cap -_2 ), \\
\{D_1 D_2 = -1 \} &=& ( +_1 \cap -_2 )\cup ( -_1 \cap +_2 ).
\end{eqnarray*}
Lemma \ref{l:add_cond_probs} can thus be applied to rewrite (\ref{e:expectab}) as
\begin{equation}\label{e:expanded}
\hspace{3mm}\int_{\Omega}\mu_{ab}(+_1\cap +_2| \lambda) + \mu_{ab}(-_1\cap -_2|\lambda) - \big[\mu_{ab}(+_1\cap -_2| \lambda) + \mu_{ab}(-_1\cap +_2| \lambda)\big] dP.
\end{equation}
We now appeal to the locality assumption.  Applying (\ref{e:locality}), as well as (\ref{e:exp1}), we can modify the terms in the integrand in the following way:
\begin{eqnarray*}
\mu_{ab}(+_1 \cap +_2| \lambda) &=& {1\over P(ab)} P(ab\cap +_1 \cap +_2 | \lambda)\\
 &=& {1\over P(a)P(b)} P(a\cap +_1 | \lambda) P(b\cap +_2 |\lambda ) \\
&=& \mu_{a}(+_1| \lambda)  \mu_{b}(+_2|\lambda).
\end{eqnarray*}
Doing the same thing for the three other terms in  (\ref{e:expanded}), we get
\begin{eqnarray*}
E_{ab}(D_1D_2) = \int_{\Omega} \mu_{a}(+_1| \lambda)\mu_b (+_2|\lambda) + \mu_{a}(-_1| \lambda)\mu_b (-_2|\lambda) \\
 -\big[ \mu_{a}(+_1| \lambda)\mu_b (-_2|\lambda) + \mu_{a}(-_1| \lambda)\mu_b (+_2|\lambda)\big] dP
\end{eqnarray*}
\begin{equation}\label{e:compact}
=\int_{\Omega} \big{[}\mu_a(+_1|\lambda) -\mu_a(-_1|\lambda)\big{]}\big{[}\mu_b(+_2|\lambda) -\mu_b(-_2|\lambda)\big{]} dP.
\end{equation}
Thus, (\ref{e:cpct}) holds.
\end{proof}

\vspace{-.18 in} 

$\hfill\Box$

\medskip

\medskip

\medskip

Now, consider the following constant:
\begin{equation}\label{e:CHSH}
 K^{CHSH} := E_{ab}(D_1 D_2) - E_{a'b}(D_1 D_2) + E_{ab'}(D_1 D_2) + E_{a'b'}(D_1 D_2).
\end{equation}
In the local setting, we can calculate bounds that $K^{CHSH}$ must obey. These bounds -- the CHSH inequality (\ref{e:gCHSH}) -- are developed in the next proposition, which requires the following lemma.

\begin{lemma}\label{l:probsbounded}
Let $X$ be an event for which $P(X)>0$ and $X\ci\xi$, where $\xi$ is a random variable. Then for any event $Y$,
\begin{equation}\label{e:probsboundedone}
0\le\mu_X(Y|\xi) \le 1, \quad \textnormal{almost surely.}
\end{equation}
\end{lemma}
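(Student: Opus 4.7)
The plan is to establish the two inequalities separately, with the lower bound essentially immediate and the upper bound requiring the independence hypothesis together with the monotonicity of conditional probability.

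For the lower bound, observe that $\mu_X(Y|\xi) = P(Y\cap X|\xi)/P(X)$, and the numerator is a version of the conditional probability of an event, which by the measure-theoretic definition is (a version of) a non-negative random variable. Dividing by the positive constant $P(X)$ preserves non-negativity, giving $\mu_X(Y|\xi) \ge 0$ almost surely.

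For the upper bound, the key step is to show $P(Y\cap X|\xi) \le P(X)$ almost surely. First, since $I_{Y\cap X} \le I_X$ pointwise on $\Omega$, the monotonicity of conditional expectation (applied with respect to the $\sigma$-algebra $\sigma(\xi)$) gives
\[
P(Y\cap X|\xi) = E(I_{Y\cap X}|\sigma(\xi)) \le E(I_X|\sigma(\xi)) = P(X|\xi) \quad \textnormal{a.s.}
\]
Next, because $X \ci \xi$, the conditional probability $P(X|\xi)$ is almost surely equal to the constant $P(X)$; this is just the statement that the best $\sigma(\xi)$-measurable predictor of $I_X$ is its mean when $I_X$ is independent of $\sigma(\xi)$. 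Chaining these two facts yields $P(Y\cap X|\xi) \le P(X)$ almost surely, and dividing through by $P(X)>0$ gives $\mu_X(Y|\xi) \le 1$ almost surely.

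The only real subtlety is that all of the (in)equalities involved are statements about equivalence classes of random variables modulo $P$-null sets, so the two ``almost surely" clauses arise from potentially different null sets and must be unioned to yield a single null set off of which $0 \le \mu_X(Y|\xi) \le 1$ holds. Since a countable union of null sets is null, this causes no issue. I do not anticipate any serious obstacle; the main care is in invoking the correct version of the independence identity $E(I_X|\sigma(\xi)) = P(X)$ a.s., which follows directly from the definition of conditional expectation and the fact that for any $G \in \sigma(\xi)$, $E(I_X I_G) = P(X)P(G) = \int_G P(X)\, dP$ by independence.
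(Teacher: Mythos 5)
Your proof is correct and follows essentially the same route as the paper's: both arguments reduce the upper bound to the monotonicity $P(Y\cap X|\xi)\le P(X|\xi)$ together with the independence identity $P(X|\xi)=P(X)$ a.s., and both get the lower bound from the a.s.\ non-negativity of conditional probability. Your version is slightly more explicit about the justification of $E(I_X|\sigma(\xi))=P(X)$ and the handling of null sets, but the underlying argument is identical.
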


\begin{proof}
We have
$$
\mu_X(Y|\xi) = {1\over P(X)} P(Y\cap X | \xi ) = {P(Y\cap X | \xi ) \over P(X|\xi)},
$$
where $P(X|\xi) = P(X)$ holds because $X\ci \xi$. Since $P(X|\xi)\ge P(Y\cap X|\xi)$ and $P(X|\xi)$ is positive (almost surely), we have
$$
\mu_X(Y|\xi) = {P(X\cap Y| \xi) \over P(X|\xi)}\le 1\quad \textnormal{a.s.}
$$
This proves the upper bound in (\ref{e:probsboundedone}). The lower bound holds because a conditional probability such as $P(Y\cap X | \xi )$ is in general greater than or equal to zero almost surely.
\end{proof}

\vspace{-.18 in} 

$\hfill\Box$

\medskip

\medskip

\begin{example}
Under (\ref{e:exp2}) and (\ref{e:exp3}), Lemma \ref{l:probsbounded} applies to expressions such as $\mu_a(+_1|\lambda)$, $\mu_{b'}(-_2|\lambda)$, etc.
\end{example}

\begin{proposition}\label{p:|CHSH|=<2}
(CHSH Inequality)  Let $a$, $b$, $D_1$, $D_2$ be as in Definition \ref{def:lambda_A_B_D1_D2}. Then, under (\ref{e:exp1}), (\ref{e:exp2}), (\ref{e:exp3}), and (\ref{e:locality}),
\begin{equation}\label{e:|CHSH|=<2}
|K^{CHSH}| \leq 2.
\end{equation}
\end{proposition}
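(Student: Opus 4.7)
The plan is to leverage Proposition \ref{p:Eab(D1D2)} to write all four expectation terms in $K^{CHSH}$ as integrals of products of the form $\mathbf{a}\mathbf{b}$, collect them into a single integral, and then bound the integrand pointwise by a classical algebraic argument.

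First I would apply Proposition \ref{p:Eab(D1D2)} (after noting that its proof only used the symmetric roles of the setting pairs, so it applies to $(a,b)$, $(a',b)$, $(a,b')$, and $(a',b')$) to write
\[
K^{CHSH} = \int_{\Omega} \bigl[\mathbf{a}\mathbf{b} - \mathbf{a'}\mathbf{b} + \mathbf{a}\mathbf{b'} + \mathbf{a'}\mathbf{b'}\bigr]\,dP = \int_{\Omega} \bigl[\mathbf{a}(\mathbf{b}+\mathbf{b'}) + \mathbf{a'}(\mathbf{b'}-\mathbf{b})\bigr]\,dP.
\]
The next step is to establish that $\mathbf{a}, \mathbf{a'}, \mathbf{b}, \mathbf{b'}$ all lie in $[-1,1]$ almost surely. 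By Lemma \ref{l:probsbounded} (combined with Experimental Assumptions (\ref{e:exp2}) and (\ref{e:exp3})), each of $\mu_a(+_1|\lambda)$ and $\mu_a(-_1|\lambda)$ lies in $[0,1]$ a.s. I would also observe that
\[
\mu_a(+_1|\lambda) + \mu_a(-_1|\lambda) = \tfrac{1}{P(a)}P(a|\lambda) = 1 \quad \text{a.s.,}
\]
using $A \ci \lambda$ from (\ref{e:exp3}), so $\mathbf{a} = 2\mu_a(+_1|\lambda)-1 \in [-1,1]$ a.s., and likewise for the other three quantities.

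The core of the argument is then the elementary observation that for real numbers $x,y,u,v$ with $|x|,|y|,|u|,|v|\le 1$, one has
\[
|x(y+v) + u(v-y)| \le |y+v| + |v-y| = 2\max(|y|,|v|) \le 2,
\]
which I would apply pointwise (almost surely) to the integrand. Finally, since the integrand is bounded in absolute value by $2$ a.s., I would conclude
\[
|K^{CHSH}| \le \int_{\Omega} \bigl|\mathbf{a}(\mathbf{b}+\mathbf{b'}) + \mathbf{a'}(\mathbf{b'}-\mathbf{b})\bigr|\,dP \le 2.
\]

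The only mildly delicate step is justifying that $\mu_a(+_1|\lambda) + \mu_a(-_1|\lambda) = 1$ a.s., which requires the independence $A \ci \lambda$ from (\ref{e:exp3}); this is the place where the third experimental assumption (unused in Proposition \ref{p:Eab(D1D2)}) is actually invoked. Everything else reduces to the standard algebraic CHSH manipulation packaged inside a Lebesgue integral, so no further measure-theoretic subtleties should arise.
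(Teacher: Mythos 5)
Your proposal is correct and follows essentially the same route as the paper: apply Proposition \ref{p:Eab(D1D2)} to all four setting pairs, regroup the integrand into two products, bound $\mathbf{a},\mathbf{a'},\mathbf{b},\mathbf{b'}$ in $[-1,1]$ via Lemma \ref{l:probsbounded} under (\ref{e:exp2}) and (\ref{e:exp3}), and finish with the pointwise bound $|x(y+v)+u(v-y)|\le 2$ integrated over $\Omega$. The only cosmetic difference is your extra verification that $\mu_a(+_1|\lambda)+\mu_a(-_1|\lambda)=1$ a.s., which is valid but unnecessary, since a difference of two quantities in $[0,1]$ already lies in $[-1,1]$.
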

\begin{proof}
By Proposition \ref{p:Eab(D1D2)}, we have
\begin{eqnarray*}
K^{CHSH} &=& \int_{\Omega} {\bf a}{\bf b} dP - \int_{\Omega} {\bf a} {\bf b'} dP + \int_{\Omega} {\bf a'}{\bf b} dP + \int_{\Omega} {\bf a'} {\bf b'} dP \\
&=& \int_{\Omega} ({\bf a}  {\bf b}) - ({\bf a} {\bf b'}) + ({\bf a'}  {\bf b}) + ({\bf a'}  {\bf b'} ) dP  = \int_{\Omega} ({\bf a}+ {\bf a'}){\bf b} + ({\bf a'}-{\bf a}){\bf b'} dP.
\end{eqnarray*}
By Lemma \ref{l:probsbounded}, (\ref{e:exp2}) and (\ref{e:exp3}) tell us that ${\bf a}$ and ${\bf a'}$ must lie in the interval $[-1, +1]$. Then by arithmetical considerations, it follows that
\begin{equation}\label{e:arith}
|{\bf a} + {\bf a'}| + |{\bf a'} - {\bf a}| \leq 2 .
\end{equation}
Since $|{\bf b}|,|{\bf b'}| \leq 1$, by (\ref{e:arith}) we have
\begin{eqnarray*}
|K^{CHSH}| &=& \bigg{|}\int_{\Omega} ({\bf a}+ {\bf a'}){\bf b} + ({\bf a'}-{\bf a}){\bf b'} dP \bigg{|} \\
&\leq& \int_{\Omega} |{\bf a}+ {\bf a'}||{\bf b}| + |{\bf a'}-{\bf a}||{\bf b'}| dP \\
&\leq& \int_{\Omega} |{\bf a}+ {\bf a'}| + |{\bf a'}-{\bf a}| dP \le 2.
\end{eqnarray*}

\end{proof}
\vspace{-.45 in} 

$\hfill\Box$

\medskip

\medskip

\medskip

As a consequence of Proposition \ref{p:|CHSH|=<2}, in any LHVT, the quantity $ K^{CHSH}$ must satisfy the simple inequality (\ref{e:|CHSH|=<2}). On the other hand, Quantum Mechanics predicts $K^{CHSH} = 2 \sqrt{2} > 2$. If we repeat the Bell test experiment many times and assume that the results of repeated trials are independent and identically distributed, we can calculate the $K^{CHSH}$ quantity empirically and draw an appropriate conclusion about the theory describing the experiment.

However, as earlier noted, the assumptions of a LHVT do not require repeated trials to be independent and identically distributed, and so we have no reason to assert that the relative frequencies of various outcomes will converge to some underlying probability. \emph{A priori}, we cannot even rule out the (pathological) possibility that each successive trial individually obeys the CHSH inequality (as required by a LHVT), but that the relative frequencies over many trials converge to the quantum values! In the next section, we address this problem.

\section{A Hypothesis Test When Trials Are Not Independent}\label{s:hypothesis_test}

If we run the experiment one time, we will randomly select one particular setting result for $A$ and $B$, and we will observe $D_1D_2$ equal to $+1$ or $-1$.  This one result tells us nothing about the satisfaction or violation of (\ref{e:|CHSH|=<2}).  We must run the experiment many times to discern a pattern. 

Luckily, we can perform a cogent hypothesis test, even without the assumption of independent, identically distributed trials.  Here is a useful analogy that will illustrate how we do this.  Suppose we were to flip 10,000 different coins, and 80\% of them were to come up ``heads."  Then we could reasonably conclude that at least \emph{some} of the 10,000 coins were biased towards heads.  The coins needn't be identically distributed - indeed, perhaps some of the coins were fair - but it is intuitively clear that \emph{some} of them must have been biased.

Analogously, each trial of the Bell test is like a coin flip, resulting in the product $D_1D_2$ being equal to  $+1$ or $-1$. In the previous section, we showed that the assumption of a LHVT puts certain constraints on the probabilities of getting $+1$ or $-1$. If the universe is governed by a LHVT, then the constraint must be satisfied on  \emph{every} trial.  On the other hand, if Quantum Mechanics is obeyed, the constraint is \emph{violated} on every trial. Then, thinking of the analogy, the locality assumption is like the assumption that \emph{every one} of the 10,000 coins are fair, whereas agreement with Quantum Mechanics will predict getting 80\% heads.  The Bell test is of course a little more complicated than coin tossing, but the analogy is a good idea to keep in mind as we design the hypothesis test.

To represent repeated trials, we must extend the framework of Section \ref{s:setting}. Let us define a sequence of random vectors:
\begin{equation}\label{e:ivariables}
\{D_{1i}, D_{2i}, A_i, B_i, \lambda _i\}_{i\in \mathbb N ^+}
\end{equation}
For each $i$, we take the above to be as defined in Definition \ref{def:lambda_A_B_D1_D2}, satisfying conditions (\ref{e:exp1})  and (\ref{e:exp3}), and a strengthened version of (\ref{e:exp2}) . That is, we assume:

\medskip

\noindent {\sc Experimental Assumption 1:}
\begin{equation}\label{e:AiBiindep}
    \forall i, \quad A_i \ci  B_i.
\end{equation}

\medskip

\noindent {\sc Experimental Assumption 2*:}
\begin{equation}\label{e:ABhalf}
   \forall i,  \quad P(A_i = a) = P(B_i = b) = 1/2.
\end{equation}

\medskip

\begin{remark}
(\ref{e:ABhalf}) can be satisfied by appropriate calibration of the experimental apparatus.  Earlier, we assumed only that these probabilities were positive; to prove an analogue of the CHSH inequality that holds over repeated trials, it is useful to assume that all the setting probabilities are calibrated to 1/2.   
\end{remark}

\medskip

\noindent {\sc Experimental Assumption 3:}

\begin{equation}\label{e:irandomchoice}
 A_i \ci \lambda_i, \quad B_i \ci\lambda_i .
\end{equation}

\medskip

An additional point about the $\lambda_i$ needs to be made.  Since $\lambda_i$ models the state of the system at the $i$th trial, the previous $i-1$ trials have already taken place.  Hence, the outcomes of previous trials are in the ``history", and can contribute to or influence the present state of the system.  Mathematically, this is modeled by assuming that the results of previous trials are events in $\sigma (\lambda_i)$.  This yields a \emph{filtration} - i.e., a sequence of nested $\sigma$-algebras:
$$
\textnormal{For } i<j, \quad\sigma(\lambda_i) \subseteq \sigma(\lambda_j).
$$
The filtration is a standard mathematical tool for modeling a time-indexed stochastic process.  The above equation is not used in our argument, but we will need to use the fact that the outcomes of previous trials are events in $\sigma (\lambda_i)$.  The following assumption formalizes this.

\medskip

\noindent {\sc Time Sequentiality:}
For any positive integer $n \geq 2$, let $I$ be a subset of $\{1, 2, ..., n-1\}$ whose cardinality we denote with the letter $m$.  Let $\vec v_1$, $\vec v_2$ be elements of $\{-1, +1\}^m$, let  $\vec w_a$ be an element of $\{a, a'\}^m$, and let $\vec w_b$ be an element of $\{b, b'\}^m$.  Then the following event is in $\sigma ( \lambda_n)$:
\begin{equation}\label{e:time_sequentiality}
\bigcap_{i\in I}\big [ \{D_{1i} = \vec v_{1i} \} \cap \{D_{2i} = \vec v_{2i} \} \cap\{A_{i} = \vec w_{ai} \} \cap \{B_{i} = \vec w_{bi} \} \big ]. 
\end{equation}

\medskip 

The astute reader will notice that this is mathematically equivalent to saying that all the single events such as $\{D_{1i} = \vec v_{1i} \}$ or $\{A_{1j} = \vec w_{1j} \}$, etc. are individually in $\sigma (\lambda_n)$; (\ref{e:time_sequentiality}) is written to emphasize simultaneity of the four events with the same  $i$-index. The significance of asserting that (\ref{e:time_sequentiality}) is in $\sigma (\lambda_n )$ is to encode the notion that $\lambda_n$ can potentially depend on the outcomes of previous trials. Of course, this assumption doesn't require that $\lambda_n$ definitely does have some relation to the outcome of previous trials; $\lambda_n$ could still be independent of this information.

For the final step, we establish a locality assumption corresponding to (\ref{e:locality}):

\medskip

\noindent {\sc Locality Assumption:}
Let $V_{i1} = (D_{i1}, A_i)$ and $V_{i2} = (D_{i2}, B_i)$.  Then
\begin{equation}\label{e:locality1}
 (V_{i1} \ci V_{i2}) |\lambda _i.
\end{equation}

\medskip

This completes the set of assumptions. Now, to formulate the hypothesis test, it will be convenient to define a random variable $C_i$ as a function of the random variables in (\ref{e:ivariables}).  So, let

$$
C_i = \begin{cases} D_{1i}D_{2i}, & \text{if $(A_i, B_i) \ne (a', b)$, } \\  -D_{1i}D_{2i}, & \text{if $(A_i, B_i) = (a', b)$.} \end{cases}
$$
$C_i$ distils the result of the $i$th trial into a single, two-output random variable. As we will see in the next proposition, the CHSH inequality applies to $C_i$ to cap the probability that $C_i = +1$ at 75\%, if we make all of the experimental assumptions plus locality.

\begin{proposition}
Under assumptions (\ref{e:AiBiindep}), (\ref{e:ABhalf}), (\ref{e:irandomchoice}), and the locality assumption (\ref{e:locality1}), we have $P(C_i = +1)\le {3\over 4}$, or equivalently, 
\begin{equation}\label{e:onehalf}
E(C_i) \leq 1/2.
\end{equation}
\end{proposition}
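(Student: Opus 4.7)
The plan is to reduce the claim about $E(C_i)$ directly to the single-trial CHSH inequality from Proposition~\ref{p:|CHSH|=<2}. By construction, for each fixed $i$ the tuple $(\lambda_i, A_i, B_i, D_{1i}, D_{2i})$ satisfies the assumptions of Definition~\ref{def:lambda_A_B_D1_D2} together with (\ref{e:AiBiindep}), (\ref{e:ABhalf}), (\ref{e:irandomchoice}) and (\ref{e:locality1}); these are precisely the hypotheses of Proposition~\ref{p:|CHSH|=<2} applied to the $i$th trial (with (\ref{e:ABhalf}) being stronger than (\ref{e:exp2})). Consequently, if we set
\[
K^{CHSH}_i := E_{ab}(D_{1i} D_{2i}) - E_{a'b}(D_{1i} D_{2i}) + E_{ab'}(D_{1i} D_{2i}) + E_{a'b'}(D_{1i} D_{2i}),
\]
then $|K^{CHSH}_i| \le 2$.

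Next, I would unfold $E(C_i)$ by conditioning on the four possible values of the pair $(A_i,B_i)$. Using (\ref{e:AiBiindep}) and (\ref{e:ABhalf}), each of the four joint settings has probability exactly $1/4$, so by the law of total expectation
\[
E(C_i) = \tfrac{1}{4}\bigl[E(C_i \mid ab) + E(C_i \mid a'b) + E(C_i \mid ab') + E(C_i \mid a'b')\bigr].
\]
By the definition of $C_i$, the middle term equals $-E_{a'b}(D_{1i}D_{2i})$, while the other three are just $E_{xy}(D_{1i}D_{2i})$ with the appropriate $(x,y)$. Hence $E(C_i) = \tfrac{1}{4} K^{CHSH}_i \le \tfrac{1}{4}\cdot 2 = \tfrac{1}{2}$, giving (\ref{e:onehalf}).

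For the equivalent formulation $P(C_i=+1)\le 3/4$, since $C_i$ takes values in $\{-1,+1\}$ we have $E(C_i) = 2P(C_i=+1) - 1$, and substituting the bound $E(C_i)\le 1/2$ rearranges immediately to the desired inequality.

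The only mild subtlety — and the one place I would be careful — is simply checking that the per-trial hypotheses truly match those of Proposition~\ref{p:|CHSH|=<2}. The time-sequentiality assumption (\ref{e:time_sequentiality}) and the filtration structure on the $\lambda_i$ play no role here; they will matter only in the subsequent martingale analysis of the full hypothesis test. For the present proposition, the argument is purely a one-line accounting computation built on top of the already-established single-trial CHSH bound.
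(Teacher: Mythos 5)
Your proposal is correct and follows essentially the same route as the paper's own proof: condition on the four equiprobable values of $(A_i,B_i)$ via the law of iterated expectation, identify the result as $\tfrac{1}{4}K^{CHSH}_i$ (with the sign flip on the $(a',b)$ term built into $C_i$), and invoke Proposition~\ref{p:|CHSH|=<2}. The only difference is cosmetic ordering, and your observation that time sequentiality is not needed here matches the paper's usage.
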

\begin{proof}
By the Law of Iterated Expectations,
$$
E(C_i)= E\big[E(C_i|(A_i,B_i))\big].
$$
Notice that $E(C_i|(A_i,B_i))$ is a discrete random variable with four outputs, corresponding to the four outputs of $(A_i, B_i)$.  Applying (\ref{e:ABhalf}) and (\ref{e:AiBiindep}), we have
$$
E\big[E(C_i|(A_i,B_i))\big]  
$$ $$
={1\over 4} \bigg ( E[C_i|(A_i,B_i)= (a, b)] + E[C_i|(A_i,B_i)= (a', b)] 
$$ $$
+ E[C_i|(A_i,B_i)= (a, b')] + E[C_i|(A_i,B_i)= (a', b')]\bigg ) 
$$ $$
={1\over 4} \bigg ( E_{ab}(D_{1i}D_{2i}) -  E_{a'b}(D_{1i}D_{2i}) +  E_{ab'}(D_{1i}D_{2i}) +  E_{a'b'}(D_{1i}D_{2i})\bigg).
$$
Noting the similarity to (\ref{e:CHSH}), we obtain the following,
\begin{equation}\label{e:CHSHi}
E(C_i) = {1\over 4}K^{CHSH}_i
\end{equation}
where we take $K^{CHSH}_i$ to be as defined in (\ref{e:CHSH}), after replacing the variables with the $i$-indexed versions given in (\ref{e:ivariables}). Assumptions (\ref{e:AiBiindep}), (\ref{e:ABhalf}), (\ref{e:irandomchoice}), and (\ref{e:locality1}) are equivalent to the assumptions of Proposition \ref{p:|CHSH|=<2} if applied to $K^{CHSH}_i$, so the proposition holds. 
\end{proof}

\vspace{-.18 in} 

$\hfill\Box$

\medskip

\medskip

\medskip

For each $i$, $C_i$ is a Bernoulli trial, taking outputs in the set $\{+1, -1\}$, so let us define
$$
p_i := P(C_i = +1).
$$
It is straightforward to compute
\begin{equation}\label{e:likebernoulli}
E(C_i) = 2p_i -1, \quad \textnormal{Var}(C_i) = 4p_i(1-p_i).
\end{equation} 
Under a LHVT, (\ref{e:onehalf}) and (\ref{e:likebernoulli}) implies that $p_i$ must be at most 75\%. On the other hand, quantum mechanics predicts that $E(C_i) = {\sqrt 2 \over 2}$, which yields a $p_i$ of roughly 85.4\%.  This will allow us to discern a difference over many trials.

We can now formulate the hypothesis test in mathematical terms:
$$
H_0:\forall i,  p_i \leq {3\over 4} \quad\quad\quad\quad\quad\quad \textnormal{(LHVT; (\ref{e:AiBiindep})-(\ref{e:locality1}) satisfied)}
$$
$$
H_A: \forall i, p_i = {1 + \sqrt 2 \over 2\sqrt 2} = .854... \quad\quad\quad\quad\quad\quad \textnormal{(Quantum)}
$$
Over $n$ trials, the natural choice for a sample statistic is $\overline{C_n}$, defined as follows:
$$
\overline{C_n} = {\sum_{i=1}^n C_i\over n}.
$$
and so under the assumption of $H_0$, we expect the sample statistic $\overline{C_n}$ to satisfy
\begin{equation}\label{e:nulltest}
E(\overline{C_n}) \leq 1/2.
\end{equation}
We will reject the null hypothesis in favor of the alternative hypothesis if $\overline{C_n} >  z$, where $z$ will be some cut-point exceeding $1/2$ by a little bit.

Let $p_n(\_)$ denote a probability mass function for the first $n$ outputs of $C_i$, and let $\Theta_0$ be the collection of $p_n(\_)$ that satisfy the assumptions  (\ref{e:AiBiindep}) -- (\ref{e:locality1}). $\Theta_0$ thus denotes the collection of allowable distributions under the null hypothesis.  Then the significance level of the hypothesis test -- the probability of Type I error -- is defined to be 
\begin{equation}\label{e:siglevel}
\alpha = \sup_{p_n(\_)\in\Theta_0}P[\overline{C_n} >  z | p_n(\_)].
\end{equation}
Calculating $\alpha$ is somewhat involved.  This is because the null hypothesis does not assert that the various $C_i$ are i.i.d., so equation (\ref{e:nulltest}) alone does not provide us with an asymptotic distribution of $\overline{C_n}$. In the absence of the assumption of i.i.d, we cannot rule out trivialities such as 
\begin{equation}\label{e:totaldependence}
C_1=C_2 = \cdots = C_{n-1} = C_n
\end{equation}
 (total dependence), for which we would have $\alpha = p_i$, independent of $n$!

The following lemma rules out possibilities like (\ref{e:totaldependence}), and it will allow us to demonstrate that $\alpha$ decreases as $n$ increases.

\begin{lemma}\label{l:Ciindependence}
Let $\vec v$ be any vector in  $ \{-1, +1\}^{i-1}$ for which $P(C_1,...,C_{i-1}=\nobreak\vec v)$ is positive. Then, under the null hypothesis -- which subsumes assumptions (\ref{e:AiBiindep})-(\ref{e:locality1}) -- we have
\begin{equation}\label{e:probbound}
P\big(C_i=+1 \big | (C_1,...,C_{i-1})=\vec v\big) \le {3\over 4}.
\end{equation}
\end{lemma}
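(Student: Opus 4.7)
The plan is to condition on the history $E := \{(C_1,\ldots,C_{i-1}) = \vec v\}$ and reduce to the single-trial bound $E(C_i) \le 1/2$ proved in the preceding proposition, but now applied to the ``restricted'' probability measure $P'(\cdot) := P(\cdot \mid E)$, which is well-defined since $P(E) > 0$ by hypothesis.

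The first step is to check that $E \in \sigma(\lambda_i)$. Each atomic event $\{D_{1j} = x\} \cap \{D_{2j} = y\} \cap \{A_j = z\} \cap \{B_j = w\}$ for $j < i$ lies in $\sigma(\lambda_i)$ by the time sequentiality assumption (\ref{e:time_sequentiality}). Since each $C_j$ is a deterministic function of the quadruple $(D_{1j}, D_{2j}, A_j, B_j)$, $\{C_j = v_j\}$ is a finite union of such atomic events, and hence $E = \bigcap_{j < i} \{C_j = v_j\}$ is in $\sigma(\lambda_i)$ as well.

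The second and crucial step is to verify that each of (\ref{e:AiBiindep}), (\ref{e:ABhalf}), (\ref{e:irandomchoice}), and (\ref{e:locality1}) holds with $P'$ in place of $P$. The key inputs are: (a) a short calculation combining $A_i \ci \lambda_i$, $B_i \ci \lambda_i$, $A_i \ci B_i$, and the locality-implied relation $A_i \ci B_i \mid \lambda_i$ yields the joint statement $(A_i, B_i) \ci \lambda_i$; since $E \in \sigma(\lambda_i)$, conditioning on $E$ does not perturb the joint distribution of $(A_i, B_i)$, giving (\ref{e:AiBiindep}) and (\ref{e:ABhalf}) under $P'$; (b) the marginal independences $A_i \ci \lambda_i$ and $B_i \ci \lambda_i$ likewise transfer to $P'$ by the same reasoning; (c) for the locality assumption (\ref{e:locality1}), the fact that $E \in \sigma(\lambda_i)$ implies that for any $\sigma(\lambda_i)$-measurable $G$, one has $G \cap E \in \sigma(\lambda_i)$, from which one deduces that $P'$-conditional expectations given $\lambda_i$ coincide with the corresponding $P$-conditional expectations on the event $E$ (which has full $P'$-measure). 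Consequently the factorization expressing $(V_{i1} \ci V_{i2}) \mid \lambda_i$ persists $P'$-almost surely.

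With the four hypotheses verified under $P'$, applying the preceding proposition to $(\Omega, \mathcal{F}, P')$ yields $E'(C_i) \le 1/2$, i.e., $E(C_i \mid E) \le 1/2$, which by $E(C_i \mid E) = 2 P(C_i = +1 \mid E) - 1$ gives the desired bound (\ref{e:probbound}). The main obstacle is purely the measure-theoretic bookkeeping in part (c) above: confirming that conditional independence given $\lambda_i$ is preserved under the change of measure $P \to P'$. Everything else is a routine restatement once one recognizes that the history event $E$ is $\sigma(\lambda_i)$-measurable --- which is exactly the content the time sequentiality axiom was designed to supply.
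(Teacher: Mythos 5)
Your argument is correct, but it takes a genuinely different route from the paper's. The paper proves the lemma by direct computation: it writes $P(C_i=+1\mid\mathcal C)$ as $\tfrac{1}{P(\mathcal C)}\int_{\mathcal C}P(C_i=+1\mid\lambda_i)\,dP$ --- using $\mathcal C\in\sigma(\lambda_i)$ only to restrict the domain of integration --- then expands the integrand into the eight constituent events of $\{C_i=+1\}$, factors each term via (\ref{e:locality1}), (\ref{e:AiBiindep}) and (\ref{e:ABhalf}) to pull out $P(a_i\cap b_i)=1/4$, and bounds the resulting expression $t[u+v]+(1-t)[(1-u)+(1-v)]+s[v+(1-u)]+(1-s)[u+(1-v)]$ by $3$ through a case analysis. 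You instead observe that the conditional measure $P'=P(\cdot\mid E)$ again satisfies all four hypotheses and then invoke the single-trial bound (\ref{e:onehalf}) wholesale. Your route is more modular: it isolates the single use of time sequentiality (membership of the history in $\sigma(\lambda_i)$) and shows that the entire one-trial framework is stable under conditioning on $\sigma(\lambda_i)$-measurable events, so none of the CHSH algebra has to be repeated. The price is the extra input $(A_i,B_i)\ci\lambda_i$, which you need so that conditioning on $E$ preserves the \emph{joint} setting distribution (the marginal independences in (\ref{e:irandomchoice}) alone would preserve only the marginals, not $A_i\ci B_i$ under $P'$); you correctly identify that this follows from (\ref{e:AiBiindep}), (\ref{e:irandomchoice}) and (\ref{e:locality1}) --- it is precisely Proposition \ref{p:comparing_lambda_independence} of the appendix --- and your transfer of the conditional-independence factorization to $P'$ via $G\cap E\in\sigma(\lambda_i)$ and $P'\ll P$ is sound. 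Both proofs work; yours arguably makes it more transparent why memory of past trials, being $\sigma(\lambda_i)$-measurable, cannot improve on the single-trial bound.
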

\begin{proof}
Let $\mathcal C$ denote the event $(C_1,...,C_{i-1})=\vec v$.  Let $C_i$ be a shorthand for the event $C_i = +1$.  Then we have
\begin{eqnarray*}
P(C_i|\mathcal C) &=& {P(C_i\cap \mathcal C) \over P(\mathcal C)} = {1\over P(\mathcal C)} E(I_{C_i \cap\mathcal C}) = {1\over P(\mathcal C)} E\big [ E(I_{C_i \cap\mathcal C}|\lambda_i)\big ] \\
&=& {1\over P(\mathcal C)} \int_\Omega E(I_{C_i \cap\mathcal C}|\lambda_i) dP = {1\over P(\mathcal C)} \int_\Omega P(C_i \cap\mathcal C|\lambda_i) dP.
\end{eqnarray*}
In the integral above, we note that $\mathcal C$ is in $\sigma (\lambda_i )$ by the time-sequential nature of the experiment, encapsulated in equation (\ref{e:time_sequentiality}).  This implies that 
$$
\int_\Omega P(C_i \cap\mathcal C|\lambda_i) dP = \int_\Omega P(C_i |\lambda_i)I_{\mathcal C} dP,
$$
which is a consequence of Theorem 9.1.3 in \cite{chung:1974}. So the integral becomes
$$
\int_{\mathcal C} P(C_i|\lambda_i) dP.
$$
Using an $i$-indexed version of the ``$+_1$" notation introduced in Definition \ref{def:lambda_A_B_D1_D2}, we apply Lemma \ref{l:add_cond_probs} to decompose the integrand into the eight constituent sub-events of $C_i$, obtaining
$$
\int_{\mathcal C} \bigg[ P(+_{1i} \cap +_{2i} \cap a_i \cap b_i |\lambda_i) + P(-_{1i} \cap -_{2i} \cap a_i \cap b_i |\lambda_i)
$$
$$
 +P(+_{1i} \cap +_{2i} \cap a_i \cap b'_i |\lambda_i) + P(-_{1i} \cap -_{2i} \cap a_i \cap b'_i |\lambda_i) 
$$
$$
 + P(+_{1i} \cap -_{2i} \cap a'_i \cap b_i |\lambda_i) + P(-_{1i} \cap +_{2i} \cap a'_i \cap b_i |\lambda_i)
$$
\begin{equation}\label{e:longint1}
 +P(+_{1i} \cap +_{2i} \cap a'_i \cap b'_i |\lambda_i) + P(-_{1i} \cap -_{2i} \cap a'_i \cap b'_i |\lambda_i)\bigg] dP.
\end{equation}
We apply (\ref{e:locality1}) to the first term of (\ref{e:longint1}) to get
$$
P(+_{1i} \cap +_{2i} \cap a_i \cap b_i |\lambda_i) =  P(+_{1i}\cap a_i |\lambda_i) P(+_{2i} \cap b_i |\lambda_i),
$$
and multiplying right-hand side above by $P(a_i \cap b_i) / P(a_i \cap b_i)$ yields, via (\ref{e:AiBiindep}) and (\ref{e:ABhalf}),
\begin{eqnarray*}
P(+_{1i} \cap +_{2i} \cap a_i \cap b_i |\lambda_i)&=&P(a_i \cap b_i)\big [ \mu_{a_i}(+_{1i}|\lambda_i) \mu_{b_i}(+_{2i}|\lambda_i)\big ] \\
&=& {1\over 4}\big [ \mu_{a_i}(+_{1i}|\lambda_i) \mu_{b_i}(+_{2i}|\lambda_i)\big ].
\end{eqnarray*}
The other seven terms simplify the same way, so (\ref{e:longint1}) becomes 
$$
{1\over 4} \int_{\mathcal C} \mu_{a_i}(+_{1i}|\lambda_i) \mu_{b_i}(+_{2i}|\lambda_i) + \mu_{a_i}(-_{1i}|\lambda_i) \mu_{b_i}(-_{2i}|\lambda_i)
$$
$$
+\mu_{a_i}(+_{1i}|\lambda_i) \mu_{b'_i}(+_{2i}|\lambda_i) + \mu_{a_i}(-_{1i}|\lambda_i) \mu_{b'_i}(-_{2i}|\lambda_i)
$$
$$
+\mu_{a'_i}(+_{1i}|\lambda_i) \mu_{b_i}(-_{2i}|\lambda_i) + \mu_{a'_i}(-_{1i}|\lambda_i) \mu_{b_i}(+_{2i}|\lambda_i)
$$
\begin{equation}\label{e:longint2}
+\mu_{a'_i}(+_{1i}|\lambda_i) \mu_{b'_i}(+_{2i}|\lambda_i) + \mu_{a'_i}(-_{1i}|\lambda_i) \mu_{b'_i}(-_{2i}|\lambda_i) dP.
\end{equation}
 If we define 
$$
t = \mu_{a_i}(+_{1i}|\lambda_i) \quad s = \mu_{a'_i}(+_{1i}|\lambda_i) \quad u = \mu_{b_i}(+_{2i}|\lambda_i) \quad v = \mu_{b'_i}(+_{2i}|\lambda_i),
$$
we can factor the integrand in (\ref{e:longint2}) and again apply Lemma \ref{l:add_cond_probs} to obtain
\begin{equation}\label{e:longint3}
{1\over 4}\int_{\mathcal C} t[u+v]+(1-t)[(1-u)+(1-v)]+s[v+(1-u)]+(1-s)[u+(1-v)] dP.
\end{equation}
By Lemma \ref{l:probsbounded}, which applies by (\ref{e:ABhalf}) and (\ref{e:irandomchoice}), we have $s$, $t$, $u$, and $v$ in $[0,1]$. With this constraint, a case analysis shows that the integrand in (\ref{e:longint3}) is always bounded by 3.  Returning to the original expression, we now have 
$$
P(C_i|\mathcal C) \le {1\over P(\mathcal C)}\cdot {1\over 4}\int_{\mathcal C} 3 dP = {3\over 4}.
$$
Hence, the claim is true. 
\end{proof}

\vspace{-.18 in} 

$\hfill\Box$

\medskip

\medskip

\medskip

Lemma \ref{l:Ciindependence} allows us to formulate an upper-limit distribution for $\overline{C_n}$, as shown in the following proposition. The result shows us that over many repetitions of the experiment, $C_i$ cannot do any better at accumulating ``$+1$" outcomes than an independent, identically distributed process that has a ${3\over 4}$ chance of success each time (i.e., a Binomial random variable). In light of Lemma \ref{l:Ciindependence}, this may seem intuitive, but the proof does take some effort.

\begin{proposition}\label{p:H_0binomialbound}
For a fixed positive integer $n$, let $B_n$ be the Binomial random variable corresponding to $n$ trials with probability of success $p_B = {3\over 4}$.  Then, under the assumptions of Lemma \ref{l:Ciindependence}, for a fixed $k\in \{0,...,n\}$, and for $i$ ranging between $1$ and $n$, 
\begin{equation}\label{e:binombound}
P(\textnormal{at least } k\textnormal{ of the } C_i \textnormal{ equal } +1) \le P(B_n\ge k).
\end{equation}
\end{proposition}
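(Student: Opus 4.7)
My plan is to prove this by induction on $n$, reducing the stochastic-dominance statement to a simple one-step recurrence and using Lemma \ref{l:Ciindependence} as the essential input. Let $S_n := \#\{i\le n : C_i = +1\}$, so the claim becomes $P(S_n \ge k) \le P(B_n \ge k)$ for every $k \in \{0,\ldots,n\}$. The base cases $k=0$ and $n=1$ are immediate: the former because both probabilities equal $1$, and the latter because Lemma \ref{l:Ciindependence} applied with empty conditioning gives $P(C_1 = +1) \le 3/4$.

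For the inductive step, I would condition on the value of $S_{n-1}$ and split:
\begin{equation*}
P(S_n \ge k) = P(S_{n-1} \ge k) + P(S_{n-1} = k-1)\cdot P(C_n = +1 \mid S_{n-1} = k-1),
\end{equation*}
using the fact that $\{S_n \ge k\} = \{S_{n-1} \ge k\} \sqcup (\{S_{n-1} = k-1\}\cap\{C_n = +1\})$. The event $\{S_{n-1} = k-1\}$ is a (finite) disjoint union of history events $\{(C_1,\ldots,C_{n-1}) = \vec v\}$ with $\vec v$ having exactly $k-1$ ones, and on each such atom Lemma \ref{l:Ciindependence} furnishes $P(C_n = +1 \mid (C_1,\ldots,C_{n-1}) = \vec v)\le 3/4$. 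Averaging over these atoms (with those of zero probability simply contributing nothing) yields $P(C_n = +1 \mid S_{n-1} = k-1)\le 3/4$.

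Substituting $P(S_{n-1} = k-1) = P(S_{n-1}\ge k-1) - P(S_{n-1}\ge k)$ and rearranging gives the clean bound
\begin{equation*}
P(S_n \ge k) \le \tfrac{1}{4}\, P(S_{n-1}\ge k) + \tfrac{3}{4}\, P(S_{n-1}\ge k-1).
\end{equation*}
On the other hand, since the trials comprising $B_n$ are independent Bernoulli$(3/4)$, we have the corresponding \emph{equality}
\begin{equation*}
P(B_n \ge k) = \tfrac{1}{4}\, P(B_{n-1}\ge k) + \tfrac{3}{4}\, P(B_{n-1}\ge k-1).
\end{equation*}
Applying the induction hypothesis to both $P(S_{n-1}\ge k)$ and $P(S_{n-1}\ge k-1)$, and noting that the coefficients $1/4$ and $3/4$ are nonnegative, we conclude $P(S_n \ge k)\le P(B_n\ge k)$, closing the induction.

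The main obstacle I anticipate is the averaging step that lifts Lemma \ref{l:Ciindependence} from a per-history bound to a bound on $P(C_n = +1 \mid S_{n-1} = k-1)$; this must be stated carefully because the Lemma is only asserted for histories of positive probability, and the edge cases where $P(S_{n-1} = k-1) = 0$ (or where $k = n$ and the event is automatically satisfied only on the all-ones history) need a brief remark. An alternative, slicker route would be an explicit coupling: enlarge the probability space with independent uniforms $U_i$ and set $X_i := \mathbf{1}\{U_i \le 3/4\}$ while representing $\mathbf{1}\{C_i = +1\}$ as $\mathbf{1}\{U_i \le P(C_i=+1\mid \mathcal F_{i-1})\}$, giving $\mathbf{1}\{C_i=+1\}\le X_i$ a.s. and hence $S_n \le \sum X_i \stackrel{d}{=} B_n$ pathwise; but the inductive recurrence above is more self-contained and matches the measure-theoretic style of the paper, so I would present that version.
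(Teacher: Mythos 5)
Your proposal is correct and follows essentially the same route as the paper's proof: induction on $n$, the decomposition $\{S_n\ge k\}=\{S_{n-1}\ge k\}\sqcup(\{S_{n-1}=k-1\}\cap\{C_n=+1\})$, a per-history application of Lemma \ref{l:Ciindependence} summed over the positive-probability atoms, the resulting recurrence $P(S_n\ge k)\le\tfrac14 P(S_{n-1}\ge k)+\tfrac34 P(S_{n-1}\ge k-1)$, and the matching Binomial identity. The edge cases you flag ($k=0$, $k=n$, zero-probability histories) are exactly the ones the paper treats separately, so no gap remains.
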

\begin{proof}
To show this holds for any fixed positive integer $n$, we use mathematical induction.

\medskip

\noindent Case 1:  $n=1$. 

There are two possibilities for $k$: 0 and 1.  For $k=1$,
$$
P(\textnormal{at least } k\textnormal{ of the } C_i \textnormal{ equal } +1) = P(C_1 = +1) \le {3\over 4} = P(B_1\ge 1),
$$
and for $k=0$, 
$$
P(\textnormal{at least } k\textnormal{ of the } C_i \textnormal{ equal } +1) = 1 = P(B_1\ge 0).
$$

\medskip

\noindent Case 2:  Assume the claim is true for $n$, and derive that it is true for $n+1$.

\medskip

Now, $k$ can range from $0$ to $n+1$.  First, let us prove it for $k$ between $1$ and $n$, and later we will prove the boundary cases of $k=0$ and $k=n+1$.  

Introduce a shorthand, 
$$
P_{n, k} (C) := P(\textnormal{for } 1\le i\le n \textnormal{, at least } k\textnormal{ of the } C_i \textnormal{ equal } +1),
$$
$$
P_{n, k} (B) := P(B_n\ge k),
$$
so what we are trying to prove can now be written as $P_{n+1,k}(C) \le P_{n+1,k}(B)$. By conditioning, 
\begin{equation}\label{e:byconditioning}
P_{n+1,k}(C) = P_{n,k}(C) + p_{C_{n+1}}\cdot \big[P_{n,k-1}(C) -P_{n,k}(C)\big],
\end{equation}
where we note that $\big[P_{n,k-1}(C) -P_{n,k}(C)\big]$ is the probability that we have \emph{exactly} $k-1$ successes after $n$ trials, and $p_{C_{n+1}}$ denotes the probability that $C_{n+1}=+1$, given exactly $k-1$ successes after $n$ trials.  As we are temporarily omitting the possibility that $k=n+1$ or $k=0$, it follows that $P_{n,k}(C)$ and $P_{n,k-1}(C)$  are well-defined and included in the scope of the inductive hypothesis.

Let $S$ be the subset of $\{-1, +1\}^n$ consisting of vectors for which exactly $k-1$ of the entries are $+1$ and $\vec v \in S \Rightarrow P\big [ (C_1,...,C_n) = \vec v \big ] >0$.  We have
\begin{multline*}
  p_{C_{n+1}} \big[P_{n,k-1}(C) -P_{n,k}(C)\big] \\
=\sum_{\vec v \in S} P\big[C_{n+1}= +1\big|(C_1, ..., C_n)= \vec v\big]P\big[(C_1, ..., C_n)= \vec v\big] \\
\le \sum_{\vec v \in S} {3\over 4} P\big[(C_1, ..., C_n)= \vec v\big] \\
= {3\over 4} \sum_{\vec v \in S}P\big[(C_1, ..., C_n)= \vec v\big] \\
= {3\over 4} \big[P_{n,k-1}(C) -P_{n,k}(C)\big],
\end{multline*}
where the inequality above follows by Lemma \ref{l:Ciindependence}. From this, (\ref{e:byconditioning}) can be re-written as 
\begin{equation}\label{e:threefourths}
P_{n+1,k}(C)\le {1\over 4} P_{n,k}(C) + {3\over 4} P_{n,k-1}(C).
\end{equation}
By the inductive hypothesis, $P_{n,k}(C)\le P_{n,k}(B)$ and $P_{n,k-1}(C)\le P_{n,k-1}(B)$, and we have 
\begin{eqnarray*}
{1\over 4} P_{n,k}(B) + {3\over 4}P_{n,k-1}(B)  &=& P_{n,k}(B) + {3\over 4} \big[P_{n,k-1}(B) -P_{n,k}(B)\big] \\
&=& P_{n+1,k}(B).
\end{eqnarray*}
Hence, $P_{n+1,k}(C)\le P_{n+1,k}(B)$.

This leaves only the boundary cases unproven.  For $k=0$, we clearly have
$$
P_{n+1,k}(C) = 1 =  P_{n+1, k} (B),
$$
so the inequality holds easily.  For $k=n+1$, we have
$$
P_{n+1, k}(C) = P_{n,k-1}(C)\cdot P(C_{n+1} = +1 | C_i = +1 \textnormal{ for } i = 1,..., n).
$$
As $P_{n,k-1}(C)\le P_{n,k-1}(B)$ by the inductive hypothesis, and as
$$
P(C_{n+1} = +1 | C_i = +1 \textnormal{ for } i = 1,..., n)\le {3\over 4}
$$
by Lemma \ref{l:Ciindependence}, we have
$$
P_{n+1,k}(C) \le {3\over 4} P_{n,k-1}(B) = P_{n+1,k}(B).
$$
\end{proof}

\vspace{-.4 in} 

$\hfill\Box$

\medskip

\medskip

\medskip

So, under the null hypothesis, the probability of getting at least $k$ ``$C_i=+1$" results over the course of $n$ trials is bounded above by the probability of getting at least $k$ ``successes" over the course of $n$ Bernoulli trials with probability of success ${3\over 4}$.  The bound is sharp: the i.i.d. case with $p_i = {3\over 4}$ is allowed (just not \emph{implied}) by assumptions (\ref{e:AiBiindep})-(\ref{e:locality1}). Note that this result directly pertains to the behavior of $\overline{C_n}$, as the event ``$\overline{C_n}>z$" is equivalent to the event  ``at least $k$ of the $C_i$ equal +1", where $k$ is an integer determined by the particular value of $z$.  

With these results, we have 
\begin{equation}\label{e:boundarycase}
   \alpha = \sup_{p_n(\_)\in\Theta_0} P\bigg(\overline{C_n}>z \bigg | p_n(\_) \bigg) = P \bigg(\overline{C_n}>z\bigg| C_i \textnormal{ i.i.d. and }\forall i,  p_i = {3\over 4}\bigg).
\end{equation}
As $\alpha$ is bounded by the i.i.d case, which is achieved at the boundary of the null parameter space, we can now calculate it.

\begin{corollary}\label{c:normaldist}
If $B_{n, \frac{3}{4}}$ is a Binomial random variable of $n$ trials with probability of success $\frac{3}{4}$, then 
\begin{equation}\label{e:binomdist}
\alpha = P\left(B_{n,\frac{3}{4}}> \frac{n}{2}(z+1)\right).
\end{equation}
\end{corollary}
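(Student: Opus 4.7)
The plan is to start from the boundary characterization established just before the statement, namely equation (\ref{e:boundarycase}), which asserts that $\alpha$ is attained in the i.i.d.\ case with each $p_i = 3/4$. Once we are in that regime, the computation is a direct translation between the sample average $\overline{C_n}$ and a count of $+1$ outcomes.

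First, I would introduce $N_n := \#\{i \le n : C_i = +1\}$. Since each $C_i$ takes values in $\{-1, +1\}$, one has the algebraic identity $\sum_{i=1}^n C_i = N_n - (n - N_n) = 2 N_n - n$, so
\begin{equation*}
\overline{C_n} > z \iff 2 N_n - n > nz \iff N_n > \tfrac{n}{2}(z + 1).
\end{equation*}
In the i.i.d.\ case with $P(C_i = +1) = 3/4$ for every $i$, the count $N_n$ is exactly distributed as the Binomial $B_{n, 3/4}$. Substituting this into (\ref{e:boundarycase}) yields
\begin{equation*}
\alpha = P\!\left(\overline{C_n} > z \,\Big|\, C_i \text{ i.i.d., } p_i = \tfrac{3}{4}\right) = P\!\left(B_{n, 3/4} > \tfrac{n}{2}(z + 1)\right),
\end{equation*}
which is (\ref{e:binomdist}).

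I do not anticipate any genuine obstacle: the reduction to the i.i.d.\ boundary case is what Proposition \ref{p:H_0binomialbound} and the discussion around (\ref{e:boundarycase}) already established, and the equivalence between the average-based event and the count-based event is just the linear change of variable $s = 2N_n - n$. The only small point worth stating explicitly is that the inequality $\overline{C_n} > z$ is strict, which forces the use of a strict inequality $N_n > \tfrac{n}{2}(z+1)$ on the Binomial side as well, matching the right-hand side of (\ref{e:binomdist}) exactly.
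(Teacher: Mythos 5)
Your argument is correct and is exactly the route the paper intends: the corollary is a direct consequence of (\ref{e:boundarycase}) together with the translation $\overline{C_n}>z \iff N_n > \tfrac{n}{2}(z+1)$ via $\sum_i C_i = 2N_n - n$, with $N_n \sim B_{n,3/4}$ in the i.i.d.\ boundary case. The paper leaves this translation implicit (it only remarks that ``$\overline{C_n}>z$'' is equivalent to ``at least $k$ of the $C_i$ equal $+1$''), so your explicit write-out matches its reasoning.
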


For various particular choices of $n$ and $z$, it may be accurate to estimate $\alpha$ using the asymptotic Normal distribution, especially for large choices of $n$. Then the approximation would be $\alpha  \approx 1 - \Phi {2z -1\overwithdelims () \sqrt{3}/ \sqrt{n}}$, where $\Phi(x)$ is the cumulative distribution function of the Standard Normal distribution. However, care should be used, as the Central Limit Theorem only states that 
\begin{equation}\label{e:stghst}
\frac{\overline{C_n}-\frac{1}{2}}{\sqrt{\frac{3}{4}}/\sqrt{n}}\sim N(0,1),
\end{equation}
which does not directly apply to (\ref{e:binomdist}) for fixed choices of $z$ as $n\to\infty$. It is safest to use the Binomial cumulative distribution function to calculate $\alpha$ exactly.

We can also calculate the power of the test, as the alternative hypothesis $H_A$ specifies the distribution of the $C_i$ exactly.  The probabilities $p_i$ are all equal to $\sim .854$, and the quantum mechanical description of the experiment asserts that successive trials are independent (as is intuitive). The power is $1 - \beta$, where $\beta$ is defined as
\begin{equation}\label{e:beta}
\beta = P(\overline{C_n} \le z | H_A).
\end{equation}
From (\ref{e:likebernoulli}), we calculate $E(C_i) = {\sqrt 2 \over 2}$ and $\textnormal{Var}(C_i) = {1\over 2}$, then
\begin{equation}\label{e:betaapprox}
\beta =P\left(B_{n, .854...}<\frac{n}{2}(z+1)\right).
\end{equation}
This can be calculated exactly, or estimated asymptotically with the Normal distribution, subject to the same caveats discussed in the previous paragraph. The Normal approximation for $\beta$ is $\beta \approx \Phi{2z - \sqrt 2\overwithdelims () \sqrt{2}/ \sqrt{n}}$.

To obtain statistical significance, the needed number of trials is not especially high. If the quantum prediction is correct, then $\overline{C_n}$ should tend to ${\sqrt 2\over 2}$.  Hence, if after $n$ trials, $C_n$ is about ${\sqrt 2\over 2}$, we can calculate a $p$-value, using $z={\sqrt 2\over 2}$ in (\ref{e:binomdist}):
\begin{equation}\label{e:p-value}
\textnormal{p-value } = P\left(B_{n, \frac{3}{4}}>\frac{n(\sqrt 2 + 2)}{4}\right).
\end{equation}
For example, to get a $p$-value of $\alpha < .05$, it would suffice to have $n\ge 50$ trials.  

The p-value calculated in (\ref{e:p-value}) is comparable to the figure claimed by \cite{barrett:2002}, and is not larger than the relevant p-values calculated numerically in \cite{zhang:2011}. The martingale-based analysis of \cite{gill:2003} would result in a larger p-value, as discussed in \cite{zhang:2011}; this is due \cite{gill:2003}'s use of the loose (though computationally simple) Azuma-Hoeffding inequality \cite{hoeffding:1963,azuma:1967} to bound the upper tail probabilities, as opposed to exact figures that can be obtained from the Binomial distribution. Tighter Azuma-Hoeffding bounds can be applied, such as expression (8) in \cite{zhang:2013}, which in our setting simplifies to
\begin{equation*}
\textnormal{p-value}\le \left[\left(\frac{1}{2-\sqrt 2}\right)^{\frac{2-\sqrt 2}{4}}\left(\frac{3}{2+\sqrt 2}\right)^{\frac{2+\sqrt 2}{4}}\right]^n.
\end{equation*}
The above bound is easier to compute than the Binomial cumulative distribution function, but there is still a meaningful gap between the bound and the exact figures:
\begin{center}
\begin{tabular}{|c|cccc|}
\multicolumn{1}{c}{n} & \multicolumn{1}{c}{$10$} & \multicolumn{1}{c}{$10^2$}& \multicolumn{1}{c}{$10^3$}& \multicolumn{1}{c}{$10^4$}\\
\hline
Exact p-value (\ref{e:p-value})  &  .2440  &  .0054  &  $6.34\times10^{-16}$  &  $8.58\times 10^{-142}$    \\
A-H Bound (8) in \cite{zhang:2013} &  .7256  &  .0405  &  $1.18\times10^{-14}$  &  $5.03\times 10^{-140}$    \\
\hline
\end{tabular}
\end{center}
As the table reveals, the difference between the upper bound and the exact calculation is roughly two orders of magnitude for larger values of $n$.

\begin{remark}
To calculate the power of the test, we used our knowledge of the quantum predictions. $H_A$ could be extended to include \emph{any} violation of locality; from a hypothesis test standpoint, our knowledge of the precise quantum predictions is not necessary. Smaller (sub-quantum) violations of the CHSH inequality would take more trials to detect. And violations of the inequality on \emph{some} trials, balanced by trials that obey the inequality, could be statistically undetectable if the trials obeying the inequality were to do so by a large enough margin.
\end{remark}

\section{Conclusion}\label{s:conclusion}

We have shown that the CHSH inequality can be proved in a completely general measure-theoretic framework, and furthermore that a hypothesis test can definitively test locality in an experimental setting.

By working in a precise setting, we gain the benefit of clearly delineating all of the assumptions being made. If $H_A$ is supported by experiment, one of the various assumptions must be false. Under most standard interpretations of the quantum description of a Bell experiment, (\ref{e:exp1})-(\ref{e:exp3}) can be satisfied and it is the locality assumption, (\ref{e:locality1}), that is violated. As Quantum Mechanics is a successful theory upheld by countless experiments, it would be logical to attribute the failure of $H_0$ to a quantum violation of (\ref{e:locality1}).

However, the formulation of $H_0$, and the derivation of the CHSH inequality (\ref{e:|CHSH|=<2}) also rest on four other assumptions; the ``experimental assumptions," (\ref{e:AiBiindep}), (\ref{e:ABhalf}), and (\ref{e:irandomchoice}), and time sequentiality, (\ref{e:time_sequentiality}). A physical theory could violate $H_0$, but still satisfy locality so long as one of the other assumptions turned out not to hold.  

It is not clear that a violation of the time sequentiality assumption (\ref{e:time_sequentiality}) would have any physical interpretation, as (\ref{e:time_sequentiality}) is really a technical detail of how to model the problem -- akin to the more basic assumption that we can model the problem with a probability space and random variables to begin with. As for the two assumptions (\ref{e:AiBiindep}) and (\ref{e:ABhalf}), these can be compared to observed data and confirmed to any desired degree of certainty.\footnote{The reader may note that confirming these two assumptions by appealing to experimental data would require an assumption that the random variable sequences $\{A_i\}$ and $\{B_i\}$ are i.i.d. -- exactly the sort of assumption we are trying to avoid in this paper. However, the difference is this: we \emph{observe} $\{A_i\}$ and $\{B_i\}$, and we may come to a reasonable conclusion that we are observing an i.i.d. sequence, whereas we will never be able to conclude this about the unobserved sequence $\{\lambda_i\}$.} On the other hand, (\ref{e:irandomchoice}) is a different creature. Equation (\ref{e:irandomchoice}) states that two observable random variables, $A_i$ and $B_i$, are independent of an unobservable random variable, $\lambda_i$, and therefore this assumption cannot be directly tested.

What would a violation of (\ref{e:irandomchoice}) imply? This would mean that whatever process you were using to randomly set the detector settings was influenced by the state of the system prior to detection, $\lambda_i$. Since we can choose any source of randomness -- a separate quantum process, a random number generator on a computer, random fluctuations of the cosmic background radiation -- to toggle the detector settings, the state of the system $\lambda_i$ would have to be correlated with all sorts of seemingly unrelated processes. However, this would be the only alternative explanation, if we are to keep the locality assumption. 

Sometimes it is claimed that it is not locality, but \emph{realism} that must be abandoned. However, there is some debate about whether realism is a well-defined, required concept in the context of Bell experiments \cite{gisin:2012}, and there is no clear invocation of realism at any point in this paper (assumption (\ref{e:irandomchoice}) is more aptly referred to as a free-will assumption, and (\ref{e:locality}) is of course a locality assumption). It could be argued that modeling the problem using the usual notions of probability fundamentally presupposes a realist viewpoint, but then it is not clear what a non-realist -- but local -- theory would be, or how such a theory could be modeled. In any case, to claim that the CHSH inequality rests on an assumption of realism requires being able to identify which of the assumptions and/or deductive steps in Sections \ref{s:setting}-\ref{s:hypothesis_test} should be identified with realism.

This paper assumes that every trial results in a detection event at both ends of the laboratory. In practice, however, there are limits in the detection efficiency of real-world particle detectors that result in most photons going undetected, so many trials end with only one detector detecting a photon, or no detections at all: see, for example, \cite{weihs:1998}, where detection efficiency was only 5\%. To properly model a real-world experiment with this constraint, one would have to allow for a third outcome, ``undetected"  or ``0", in addition to the two outcomes ``$+1$" and ``$-1$". Previous papers \cite{pearle:1970,clauserhorne:1974,mermin:1987} have analyzed how to model this additional-outcome experiment and it has been found that, for a CHSH experiment using the singlet state, Quantum Mechanics is distinguishable from any LHVT so long as the detection efficiency exceeds a crucial cut-off of about 83\%, an efficiency that has not yet been achieved in a CHSH experiment. Detection-efficiency issues can be addressed in a completely general measure-theoretic framework without making i.i.d. assumptions about repeated trials; this is done in a separate work \cite{dissertation}.

\newpage

{\bf Acknowledgements}\quad
The author would like to thank Michael Mislove and Keye Martin for their support and guidance, as well as Gustavo Didier and Lev Kaplan for their helpful comments and suggestions.

\begin{appendix}
\section*{Appendix}
\setcounter{section}{1}

In this paper, we worked in the most general measure-theoretic setting. In addition to requiring more work, the general setting can make it harder to gain an intuitive grasp of the probabilistic assumptions in the model. In contrast, the system of Brandenburger and Yanofsky \cite{brandenburger:2008} involves a simplifying assumption that the state of the system, $\lambda$, is a discrete random variable with finitely many outputs. With this assumption, notions such as ``locality" and ``$\lambda$-independence" are easier to formulate and easier to understand. Though the finite-$\lambda$ assumption restricts the type of theories one can model, this is not as egregious as it might seem: in some hidden-variable situations, any possible correlation scenario can be modeled by a finite-output $\lambda$, as discussed in \cite{fritz:2012}. 

In this appendix, we investigate what happens to the system of Section \ref{s:setting} if we make the additional assumption that $\lambda$ is a discrete random variable with finitely many outputs. This will allow us to directly compare our system to the system of \cite{brandenburger:2008}, as well as to illustrate and clarify the nature of our particular choices of assumptions.

Before restricting ourselves to the finite-$\lambda$ setting, we can show that, working in the system of Section \ref{s:setting}, we can derive the following alternate version of (\ref{e:exp3}): 

\medskip

\noindent {\sc Experimental Assumption 3*:}
\begin{equation}\label{e:choicemod}
 (A, B) \ci\lambda.
\end{equation}

\medskip

\noindent The above alternative version of (\ref{e:exp3}) is more similar to the ``$\lambda$-independence" assumption as formulated in \cite{brandenburger:2008}. (\ref{e:choicemod}) is also a stronger assumption than (\ref{e:exp3}): one can verify that (\ref{e:choicemod}) directly implies (\ref{e:exp3}), but the converse does not hold. However,  if we also assume (\ref{e:exp1}) and (\ref{e:locality}), we \emph{can} derive (\ref{e:choicemod}) from (\ref{e:exp3}).

\begin{proposition}\label{p:comparing_lambda_independence}
The condition (\ref{e:exp3}), in conjunction with (\ref{e:exp1}) and (\ref{e:locality}), implies (\ref{e:choicemod}).  
\end{proposition}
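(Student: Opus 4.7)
The plan is to verify $(A, B) \ci \lambda$ by checking, for each of the four atomic events $\{A=x\} \cap \{B=y\}$ with $(x, y) \in \{a, a'\} \times \{b, b'\}$, that the conditional probability given $\lambda$ equals the unconditional probability almost surely. Since $(A, B)$ is a discrete random vector with this finite range, these four equalities together establish the joint independence statement in (\ref{e:choicemod}).

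To prove each such equality, the strategy is a chain of three substitutions. First, I would extract from the locality assumption $(V_1 \ci V_2) | \lambda$ the weaker conditional statement $(A \ci B) | \lambda$, which follows because $A$ and $B$ are coordinate projections of $V_1$ and $V_2$, respectively, so conditional independence of the vectors passes to any pair of their measurable functions. This conditional independence permits the factorization
$$
P(\{A=x\} \cap \{B=y\} | \lambda) = P(A=x | \lambda) \cdot P(B=y | \lambda) \quad \textnormal{a.s.}
$$
Second, I would apply (\ref{e:exp3}) to each factor, replacing the conditional probability with the corresponding unconditional probability: $P(A=x | \lambda) = P(A=x)$ a.s., and similarly for $B$. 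Third, (\ref{e:exp1}) allows me to recombine these unconditional marginals into the unconditional joint, giving $P(A=x) \cdot P(B=y) = P(\{A=x\} \cap \{B=y\})$. Chaining the three substitutions yields $P(\{A=x\} \cap \{B=y\} | \lambda) = P(\{A=x\} \cap \{B=y\})$ almost surely, which is precisely what is required.

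The argument is essentially a direct chain of substitutions made legal by standard properties of conditional probability, so I do not anticipate any real obstacle. The only mild subtlety is that each intermediate equality holds only almost surely; however, since there are only four atomic events to check, the exceptional null sets can be combined via a finite union, so the final identity holds on a common probability-one event for all four atoms simultaneously.
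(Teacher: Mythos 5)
Your proposal is correct and follows essentially the same route as the paper: the paper establishes your intermediate claim $(A \ci B)\,|\,\lambda$ explicitly, by decomposing $P(\{A=x\}\cap\{B=y\}\,|\,\lambda)$ over the four outcomes of $(D_1,D_2)$ via Lemma \ref{l:add_cond_probs}, factoring each term with (\ref{e:locality}), and recombining, whereas you package those steps as the standard fact that conditional independence passes to measurable functions of the conditionally independent vectors. The remaining chain---apply (\ref{e:exp3}) to each factor, then (\ref{e:exp1}) to recombine---is identical to the paper's.
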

\begin{proof}
We show that $P\big( (A, B) = (a, b) | \lambda\big) = P\big( (A, B) = (a, b)\big )$; the proof for the three other cases $(a, b')$, $(a', b)$, and $(a', b')$ is the same.  We have
$$
P\big( (A, B) = (a, b) | \lambda\big) = \sum_{i=\pm 1}\sum_{j=\pm 1}P\big(\{A=a\}\cap \{B=b\} \cap (\{D_1 = i\} \cap \{D_2 = j\})\big |\lambda \big ),
$$
by Lemma \ref{l:add_cond_probs}. Applying (\ref{e:locality}) to the above expression yields 
\begin{equation*}
 \sum_{i=\pm 1}\sum_{j=\pm 1} P\big (\{D_1=i\}\cap\{A=a\}\big|\lambda\big)\cdot P\big(\{D_2=j\}\cap\{B=b\}\big|\lambda\big). 
\end{equation*}
Factoring the expression and applying Lemma \ref{l:add_cond_probs}, we obtain
\begin{eqnarray*}
&&P\big(+_2\cap\{B=b\}\big|\lambda\big)\bigg [ P\big (+_1\cap\{A=a\}\big|\lambda\big) + P\big (-_1\cap\{A=a\}\big|\lambda\big)\bigg ] \\
&&+ P\big(-_2\cap\{B=b\}\big|\lambda\big)\bigg [ P\big (+_1\cap\{A=a\}\big|\lambda\big) + P\big (-_1\cap\{A=a\} \big|\lambda\big)\bigg ] \\
&=& P\big(+_2\cap\{B=b\}\big|\lambda\big)\cdot P(A=a|\lambda) +P\big(-_2\cap\{B=b\}\big|\lambda\big)\cdot P(A=a|\lambda) \\
&=&P(A = a|\lambda)P(B = b|\lambda).
\end{eqnarray*}
\noindent Now, by applying (\ref{e:exp3}) and then (\ref{e:exp1}), we have
$$
P(A = a|\lambda)P(B = b|\lambda) = P(A = a)P(B = b) = P\big ( (A, B) = (a, b) \big ).
$$

\end{proof}

\vspace{-.46 in} 

$\hfill\Box$

\medskip

\medskip

\medskip

Proposition \ref{p:comparing_lambda_independence} rules out the possibility of $\lambda$ having some dependence on the joint distribution of $A$ and $B$. 

Moving forward, we now make the assumption that the random variable $\lambda$, introduced in Section \ref{s:setting}, is of the form
\begin{equation}\label{e:lambda_is_finite}
\lambda : \Omega \rightarrow \Lambda = \{l_1, ..., l_n\},
\end{equation}
so $\Lambda$ is now taken to be a finite set containing $n$ elements.  (The nature of its constituent elements $l_i$ is not characterized, or important.) We also assume that for all $i$, $P(\lambda = l_i)> 0$; any zero-probability event has no observable effect on the behavior of the model, so we remove such events from consideration. Henceforth we will use $l_i$ to refer to the event $\{\lambda = l_i\} = \lambda^{-1}(l_i)\subseteq\Omega$, unless doing so could create ambiguity.  

With the assumption that $\lambda$ is finite, the expression (\ref{e:exp3}) is now equivalent to
\begin{equation*}
\forall i\in \{1, ..., n\}, {\bf a} \in \{a, a'\}, \textnormal{ and }{\bf b} \in \{b, b'\},
\end{equation*}
\begin{equation*}
P(A={\bf a}\cap l_i) =  P(A={\bf a})P(l_i) \quad \textnormal{ and }
\end{equation*}
\begin{equation*}
 P(B={\bf b}\cap l_i) =  P(B={\bf b})P(l_i). 
\end{equation*}
The interpretation of (\ref{e:locality}) is simplified as well. This is because for any event $E$, $P(E|\lambda)$ will now just be a simple function that is equal to $P(E|l_i)$ on each set $l_i$.  Now (\ref{e:locality}) is equivalent to the condition
\begin{equation*}
 \forall i\in \{1,...,n\}, {\bf a} \in \{a, a'\}, {\bf b} \in \{b, b'\}, j_a \in \{+_1, -_1\}, \textnormal{ and }k_b\in \{+_2, -_2\},
\end{equation*}
\begin{equation}\label{e:finite_locality}
P(j_a, {\bf a}, k_b, {\bf b}|l_i) = P(j_a, {\bf a}|l_i)P(k_b, {\bf b}|l_i).
\end{equation}

Note that the conditionals in (\ref{e:finite_locality}) are events, not random variables, resulting in a simpler construction when compared to expressions like (\ref{e:oneconseq}). So now, all of the assumptions (\ref{e:exp1}) - (\ref{e:locality}) can be expressed in terms of elementary probabilistic statements concerning a finite collection of events.

We now show that the axiomatization of Section \ref{s:setting} is essentially equivalent to the axiomatization of \cite{brandenburger:2008}, when applied to the relevant experimental setup (i.e., an experiment with two detectors, two detector settings, and two outcomes). To do this, note that if we assume (\ref{e:exp1}), (\ref{e:exp2}), and replace (\ref{e:exp3}) with the stronger (\ref{e:choicemod}), then the following statement,
\begin{equation}\label{e:locality_equivalence_A}
\forall i, {\bf a}, {\bf b}, j_a, k_b,\quad {P(j_a, {\bf a}, k_b, {\bf b},l_i)\over P(l_i)} = {P(j_a, {\bf a},l_i)P(k_b, {\bf b},l_i)\over P(l_i)^2}
\end{equation}
is equivalent to 
\begin{equation}\label{e:locality_equivalence_B}
\forall i, {\bf a}, {\bf b}, j_a, k_b, \quad {P(j_a, {\bf a}, k_b, {\bf b},l_i)\over P({\bf a}, {\bf b}, l_i)} = {P(j_a, {\bf a},l_i)P(k_b, {\bf b},l_i)\over P({\bf a}, l_i)P({\bf b}, l_i)}.
\end{equation}
\noindent Demonstrating the above biconditional is a straightforward exercise. Note that (\ref{e:locality_equivalence_A}) is equivalent to (\ref{e:finite_locality}). Now, recall that by Proposition \ref{p:comparing_lambda_independence}, we have the following logical relationship between assumptions,
\begin{equation*}
(\ref{e:exp1}), (\ref{e:exp2}), (\ref{e:exp3}), (\ref{e:locality})\quad \Leftrightarrow \quad (\ref{e:exp1}), (\ref{e:exp2}), (\ref{e:choicemod}), (\ref{e:locality}),
\end{equation*}
and in the finite-$\lambda$ setting, we have (\ref{e:locality}) $\Leftrightarrow$ (\ref{e:finite_locality}), so we can say that 
\begin{equation*}
(\ref{e:exp1}), (\ref{e:exp2}), (\ref{e:choicemod}), (\ref{e:locality}) \quad \Leftrightarrow \quad (\ref{e:exp1}), (\ref{e:exp2}), (\ref{e:choicemod}), (\ref{e:locality_equivalence_B}).
\end{equation*}
The collection of assumptions on the right side of the above equivalence is closely related to the framework of \cite{brandenburger:2008} as it would apply to the 2-detector, 2-setting, 2-outcome scenario. (\ref{e:choicemod}) is equivalent to Definition 2.4 (``$\lambda$-independence") in \cite{brandenburger:2008}, and (\ref{e:locality_equivalence_B}) is equivalent to Definition 2.10 (``locality"). So in a finite-$\lambda$ setting, our framework - i.e., the set of conditions (\ref{e:exp1}) - (\ref{e:locality}) - is equivalent to the Brandenburger/Yanofsky framework applied to a 2-dector, 2-setting, 2-outcome scenario where measurement choices are independent from each other (the condition (\ref{e:exp1})) and none of the measurement settings have trivial probabilities (the condition (\ref{e:exp2})). 

\end{appendix}




\bibliographystyle{spphys}       
\bibliography{Bibliography}   

%
%

\end{document}